\definecolor{shadecolor}{rgb}{0.9, 0.9, 0.9}
\newcommand{\E}[2][]{\ensuremath{\langle #2 \rangle}_{#1}} 
\newcommand{\Var}[1]{\ensuremath{\mathrm{Var}[ #1 ]} }
\newcommand{\Prob}[1]{\ensuremath{\mathrm{P}[ #1 ]}} 
\newcommand{\cond}{\ensuremath{\,| \,}}
\newcommand{\sh}{\ensuremath{\hat m}}
\newcommand{\mh}{\ensuremath{\hat m}}
\newcommand{\convergesinprobability}{\ensuremath{ \xrightarrow[]{\mathrm{p}}}}
\newcommand{\BbbN}{\mathbb{N}}
\newcommand{\BbbR}{\mathbb{R}}
\newsavebox\myboxA
\newsavebox\myboxB
\newlength\mylenA
\newcommand*\widebar[2][0.8]{%
    \sbox{\myboxA}{$\m@th#2$}%
    \setbox\myboxB\null
    \ht\myboxB=\ht\myboxA%
    \dp\myboxB=\dp\myboxA%
    \wd\myboxB=#1\wd\myboxA
    \sbox\myboxB{$\m@th\overline{\copy\myboxB}$}
    \setlength\mylenA{\the\wd\myboxA}
    \addtolength\mylenA{-\the\wd\myboxB}%
    \ifdim\wd\myboxB<\wd\myboxA%
       \rlap{\hskip 0.5\mylenA\usebox\myboxB}{\usebox\myboxA}%
    \else
        \hskip -0.5\mylenA\rlap{\usebox\myboxA}{\hskip 0.5\mylenA\usebox\myboxB}%
    \fi}
\newtheorem{theorem}{\textsc{Theorem}}
\newtheorem{lemma}[theorem]{\textsc{Theorem}}
\newtheorem{corollary}[theorem]{\textsc{Corollary}}
\newtheorem{proposition}[theorem]{\textsc{Conjecture}}
\theoremstyle{definition}
\newtheorem{definition}{\textsc{Definition}}
\theoremstyle{remark}
\newtheorem*{remark}{Remark}
\renewcommand{\thesubsection}{\arabic{subsection}}
\newcommand{\beginsupplement}{%
        \setcounter{table}{0}
        \renewcommand{\thetable}{S\arabic{table}}%
        \setcounter{figure}{0}
        \renewcommand{\thefigure}{S\arabic{figure}}%
        \setcounter{equation}{0}
        \renewcommand{\theequation}{S\arabic{equation}}%
        \setcounter{theorem}{0}
        \renewcommand{\thetheorem}{S\arabic{theorem}}%
        \setcounter{definition}{0}
        \renewcommand{\thedefinition}{S\arabic{definition}}%
        \renewcommand{\thesubsection}{Supp. \arabic{subsection}}
     }
\newcommand{\autocite}{\cite}
\newcommand{\textcite}{\cite}
\title{Inferring collective dynamical states from widely unobserved systems}
\author{J. Wilting$^1$ and V. Priesemann$^{1,2}$ \\ {\small $^1$Max-Planck-Institute for Dynamics and Self-Organization, G\"ottingen} \\ {\small $^2$Bernstein-Center for Computational Neuroscience, G\"ottingen}}
\date{ \normalsize \today}
\begin{document}

\twocolumn[
  \begin{@twocolumnfalse}

\maketitle
       
\vspace*{0.1cm}

\noindent
\textbf{
\noindent
When assessing spatially-extended complex systems, one can rarely sample the states of all components.
We show that this spatial subsampling typically leads to severe underestimation of the risk of instability in systems with propagating events.
We derive a subsampling-invariant estimator, and demonstrate that it correctly infers the infectiousness of various diseases under subsampling, making it particularly useful in countries with unreliable case reports.
In neuroscience, recordings are strongly limited by subsampling.
Here, the subsampling-invariant estimator allows to revisit two prominent hypotheses about the brain’s collective spiking dynamics: asynchronous-irregular or critical.
We identify consistently for rat, cat and monkey a state that combines features of both and allows input to reverberate in the network for hundreds of milliseconds.
Overall, owing to its ready applicability, the novel estimator paves the way to novel insight for the study of spatially-extended dynamical systems.
}
\vspace*{1cm}

\vspace*{0.1cm}

  \end{@twocolumnfalse}
]




How can we infer properties of a high-dimensional dynamical system if we can only observe a very small part of it? This problem of spatial subsampling is common to almost every area of research where spatially extended, time evolving systems are investigated.
For example, in many diseases the number of reported infections may be much lower than the unreported ones \autocite{Papoz1996}, or in the financial system only a subset of all banks is evaluated when assessing the risk of developing system wide instability \autocite{Quagliariello2009} (``stress test'').
Spatial subsampling is particularly severe when recording neuronal spiking activity, because the number of neurons that can be recorded with ms precision is vanishingly small compared to the number of all neurons in a brain area \autocite{Priesemann2009,Ribeiro2010,Ribeiro2014,Levina2017} (Fig. \ref{fig:branching_cartoon}\textbf{a}).

Here, we show that subsampling leads to a strong overestimation of stability in a large class of time evolving systems (\ref{sec:supp_applicability}), which include epidemic spread of infectious diseases  \autocite{Farrington2003}, cell proliferation, evolution (see \cite{Kimmel2015} and references therein), neutron processes in nuclear power reactors \autocite{Pazy1973}, spread of bank-ruptcy \autocite{Filimonov2012}, evolution of stock prices \autocite{Mitov2009}, or the propagation of spiking activity in neural networks \autocite{Beggs2003,Haldeman2005} (Fig. \ref{fig:branching_cartoon}\textbf{b}).
However, correct risk prediction is essential to timely initiate counter actions to mitigate the propagation of events.
We introduce a novel estimator that allows correct risk assessment even under strong subsampling.
Mathematically, the evolution of all these systems is often approximated by a process with a 1$^{\mathrm{st}}$ order autoregressive representation (PAR), e.g. by an AR(1), branching, or Kesten process (Fig. \ref{fig:supp_ar_processes}, \ref{sec:supp_bps}).
For these processes, we derive first the origin of the estimation bias and develop a novel estimator, which we analytically prove to be consistent under subsampling.
We then apply the novel estimator to models and real-world data of disease and brain activity.
To assure that a PAR is a reasonable approximation of the complex system under study, and to exclude contamination through potential non-stationarities, we included a set of automated, data-driven tests.

In a PAR\footnotemark[2], the activity in the next time step, $A_{t + 1}$, depends linearly on the current activity $A_t$.
In addition, it incorporates external input, e.g. drive from stimuli or other brain areas, with a mean rate $h$, yielding the autoregressive representation 

\footnotetext[2]{
For the mathematically inclined reader we recommend the detailed derivation in \ref{sec:supp_applicability} -- 4.} 

\begin{equation}
\E{A_{t+1}|A_{t}} = m\, A_{t} + h,
\label{eq:autoreg}
\end{equation}

\noindent
where $\E{\cdot \cond \cdot}$ denotes the conditional expectation.
The stability of $A_{t}$ is solely governed by $m$, e.g. the mean number of persons infected by \emph{one} diseased person \autocite{Heathcote1965}.
The activity is stationary if $m<1$, while it grows exponentially if $m>1$.
The state $m=1$ separates the stable from the unstable regime. 
Especially close to this transition, a correct estimate of $m$ is vital to assess the risk that $A_{t}$ develops a large, potentially devastating cascade or avalanche of events (e.g. an epidemic disease outbreak or an epileptic seizure), either generically or via a minor increase in $m$. 

\begin{figure*}[]
\centering
\includegraphics[width=0.9\textwidth]{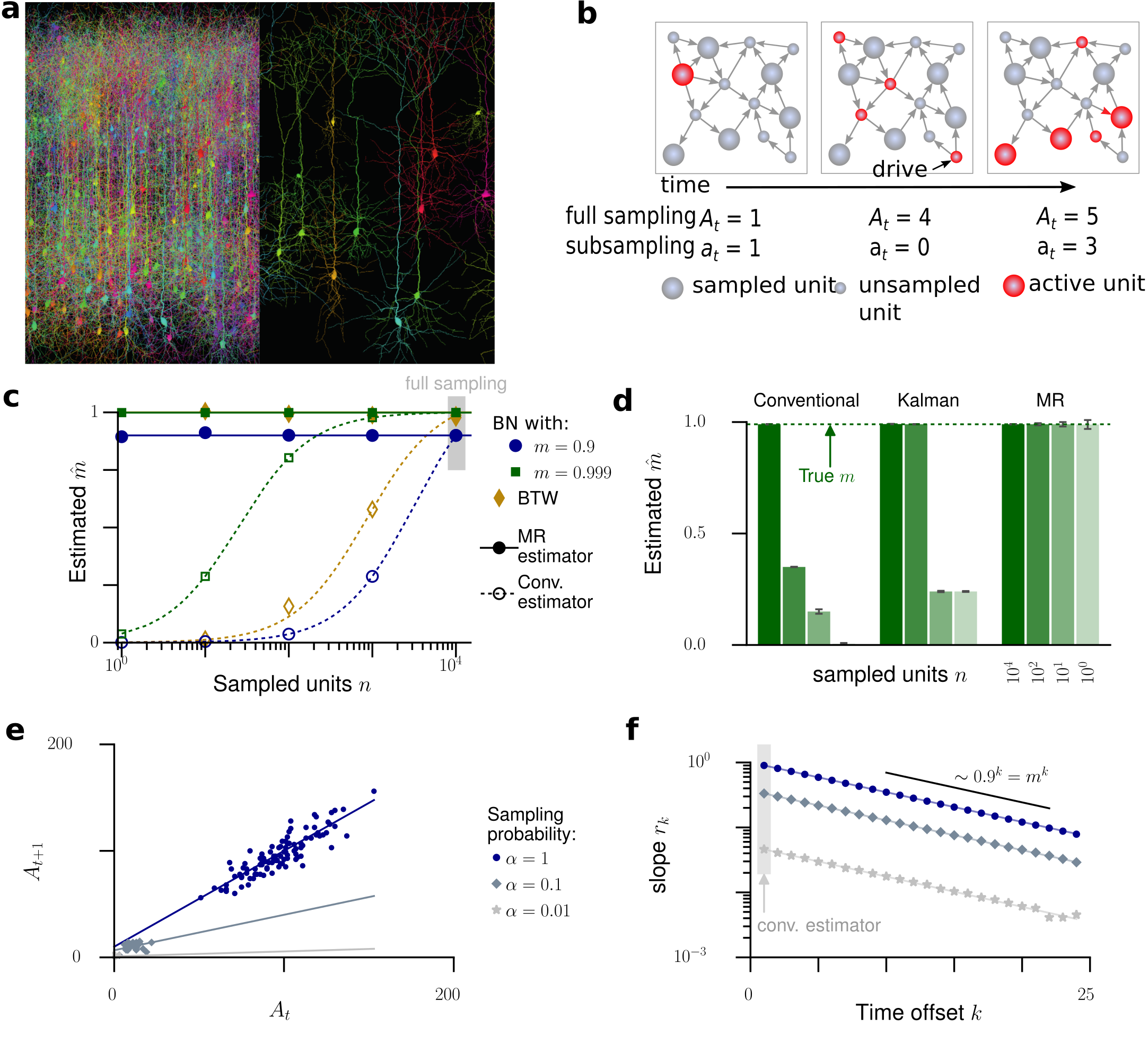}
\caption{\textbf{Spatial subsampling.}
\textbf{a}. In complex networks, such as the brain, often only a small subset of all units can be sampled (spatial subsampling); figure created using TREES \autocite{Cuntz2010}. 
\textbf{b}. In a branching network (BN), an active unit (e.g. a spiking neuron, infected individual, or defaulting bank) activates some of its neighbors in the next time step.
Thereby activity can spread over the system.
Units can also be activated by external drive.
As the subsampled activity $a_{t}$ may significantly differ from the actual activity $A_{t}$, spatial subsampling can impair inferences about the dynamical properties of the full system.
\textbf{c}. In recurrent networks (BN, Bak-Tang-Wiesenfeld model (BTW)), the conventional estimator (empty symbols) substantially underestimates the branching ratio $m$ when less units $n$ are sampled, as theoretically predicted (dashed lines). The novel multistep regression (MR) estimator (full symbols) always returns the correct estimate, even when sampling only 10 or 1 out of all $N=10^4$ units.
\textbf{d}. For a BN with $m=0.99$, the conventional estimator infers $\mh = 0.37$, $\mh = 0.1$ or $\mh = 0.02$ when sampling 100, 10, or 1 units respectively. Kalman filtering based estimation returns approximately correct values under slight subsampling ($n=100$), but is biased under strong subsampling. In contrast, MR estimation returns the correct $\mh$ for any subsampling. 
\textbf{e}. MR estimation is exemplified for a subcritical branching process ($m=0.9$, $h=10$), where active units are observed with probability $\alpha$.
Under subsampling (gray), the regression slopes $r_1$ are smaller than under full sampling (blue).
\textbf{f}. While conventional estimation of $m$ relies on the linear regression $r_1$ and is biased under subsampling, MR estimation infers $\hat{m}$ from the exponential relation $r_k \propto m^k$, which remains invariant under subsampling.
}
\label{fig:branching_cartoon}
\end{figure*}

\noindent

A conventional estimator \autocite{Heyde1972,Wei1990} $\mh_\mathrm{C}$ of $m$ uses linear regression of activity at time $t$ and $t+1$, because the slope of linear regression directly returns $m$ owing to the autoregressive representation in Eq. \eqref{eq:autoreg}.
This estimation of $m$ is consistent if the full activity $A_{t}$ is known.
However, under subsampling it can be strongly biased, as we show here.
To derive the bias quantitatively, we model subsampling in a generic manner in our stochastic framework: We assume only that the subsampled activity $a_{t}$ is a random variable that \emph{in expectation} it is proportional to $A_t$, $\E{a_{t} \cond A_{t}} = \alpha \, A_{t} + \beta$ with two constants $\alpha$ and $\beta$ (\ref{sec:supp_subsampling}).
This represents, for example, sampling a fraction $\alpha$ of all neurons in a brain area.
Then the conventional estimator is biased by $m \, ( \alpha^2 \Var{A_{t}} \, / \, \Var{a_{t}} - 1 )$ (Corollary \ref{theorem:regression_biased}).
The bias vanishes only when all units are sampled ($\alpha = 1$, Figs. \ref{fig:branching_cartoon}\textbf{c}--\textbf{e}), but is inherent to subsampling and cannot be overcome by obtaining longer recordings.

Kalman filtering \autocite{Hamilton1994,Shumway1982,Ghahramani1996}, a state-of-the-art approach for system identification, cannot overcome the subsampling bias either, because it assumes Gaussian noise for both the evolution of $A_t$ and the sampling process for generating $a_t$ (see \ref{sec:supp_kalman}).
These assumptions are violated under typical subsampling conditions, when the values of $a_t$ become too small, so that the central limit theorem is not applicable, and hence Kalman filtering fails  (Figs. \ref{fig:branching_cartoon}\textbf{d}, \ref{fig:supp_kalman}).
It is thus applicable to a much narrower set of subsampling problems and in addition requires orders of magnitude longer runtime compared to our novel estimator (Fig. \ref{fig:supp_kalman}).
 
Our novel estimator takes a different approach than the other estimators (\ref{sec:supp_mr}).
Instead of directly using the biased regression of activity at time $t$ and $t+1$, we perform multiple linear regressions of activity between times $t$ and $t+k$ with different time lags $k=1,\ldots,k_\mathrm{max}$.
These return a collection of linear regression slopes $r_k$ (note that $r_1$ is simply the conventional estimator $\hat{m}_\mathrm{C}$).
Under full sampling, one expects an exponential relation \autocite{Statman2014} $r_k = m^k$ (Theorem \ref{lemma:mlr_slopes}).
Under subsampling, however, we showed that all regressions slopes $r_k$ between $a_t$ and $a_{t+k}$ are biased \emph{by the same factor} $b = \alpha^2 \Var{A_{t}} \, / \, \Var{a_{t}}$ (Theorem \ref{theorem:mlr_under_subsampling}).
Hence, the exponential relation generalizes to

\begin{equation}
r_k = \alpha^2 \frac{\Var{A_t}}{\Var{a_t}} \, m^k = b \, m^k
\label{eq:mr_main}
\end{equation}

\noindent
under subsampling.
The factor $b$ is, in general, not known and thus $m$ cannot be estimated from any $r_k$ alone.
However, because $b$ is constant, one does not need to know $b$ to estimate $\mh$ from regressing the collection of slopes $r_k$ against the exponential model $b \, m^k$ according to Eq. \eqref{eq:mr_main}. This result serves as the heart of our new multiple regression (MR) estimator (Figs. \ref{fig:branching_cartoon}\textbf{f}, \ref{fig:supp_ar_processes}, \ref{fig:supp_consistency_transients}, Corollary \ref{theorem:mlr_unbiased} and Theorem \ref{theorem:mlr_under_subsampling}).

 \begin{figure*}
 \centering
\includegraphics[width=120mm]{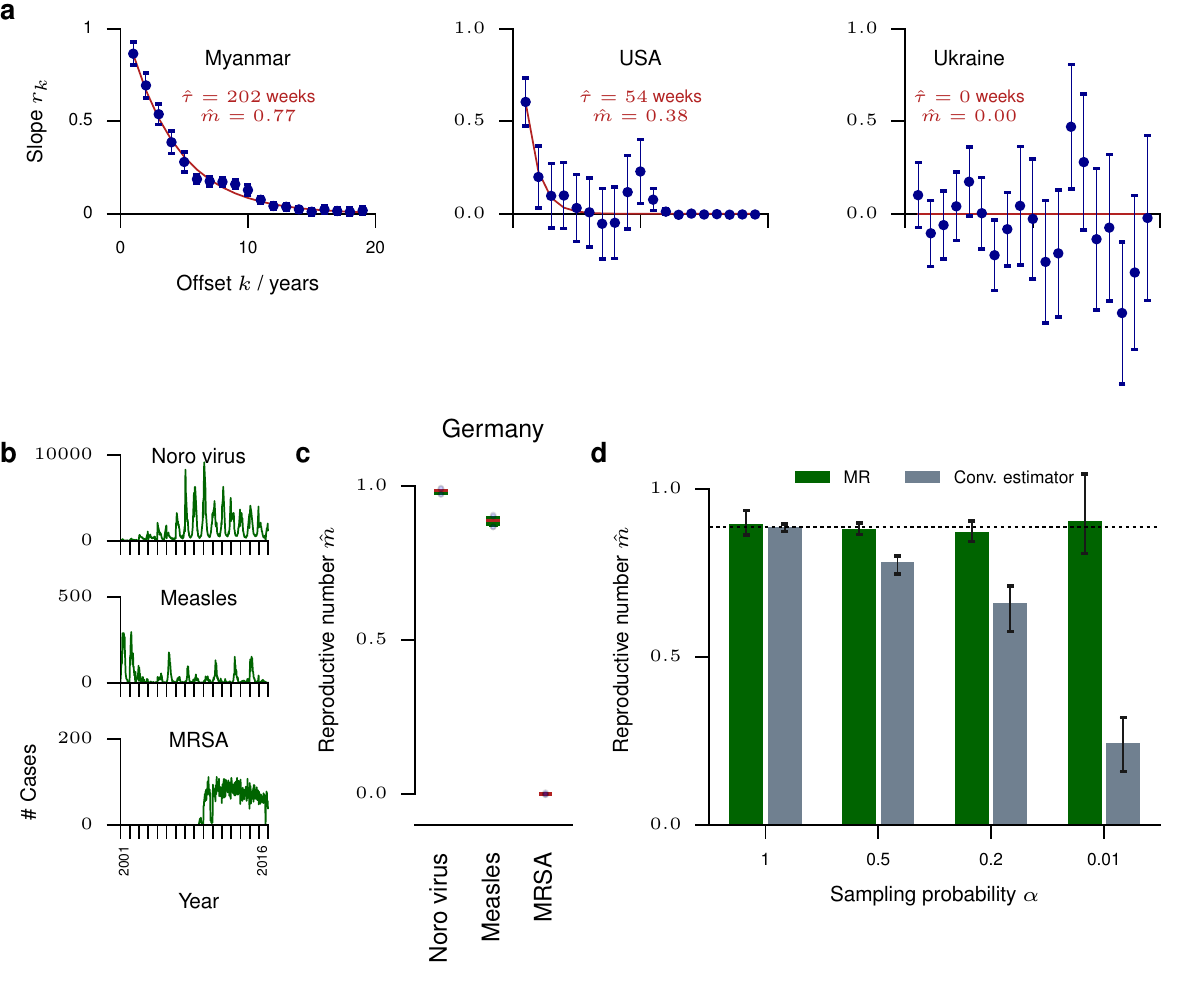}
\caption{\textbf{Disease propagation.}
In epidemic models, the reproductive number $m$ can serve as an indicator for the infectiousness of a disease within a population, and predict the risk of large incidence bursts.
We have estimated $\mh$ from incidence time series of measles infections for 124 countries worldwide (\ref{sec:supp_models}); as well as noroviral infection, measles, and invasive meticillin-resistant Staphylococcus aureus (MRSA) infections in Germany.
\textbf{a.} MR estimation of $\hat{m}$ is shown for measles infections in three different countries. Error bars here and in all following figures indicate 1SD or the corresponding 16\% to 84\% confidence intervals if asymmetric. The reproductive numbers $\hat{m}$ decrease with the vaccination rate (Spearman rank correlation: $r = -0.342, p < 10^{-4}$).
\textbf{b.} Weekly case report time series for norovirus, measles and MRSA in Germany. 
\textbf{c.} Reproductive numbers $\hat{m}$ for these infections.
\textbf{d.} When artificially subsampling the measles recording (under-ascertainment), conventional estimation underestimates $\mh_\mathrm{C}$, while MR estimation still returns the correct value. Both estimator return the same $\mh$ under full sampling.
}
\label{fig:diseases}
\end{figure*}

In fact, MR estimation is equivalent to estimating the autocorrelation time of subcritical PARs, where autocorrelation and regression $r_k$ are equal:  We showed that subsampling decreases the autocorrelation \textit{strength} $r_k$, but the autocorrelation \textit{time} $\tau$ is preserved.
This is because the system itself evolves independently of the sampling process.
While subsampling biases each regression $r_k$ by decreasing the mutual dependence between subsequent observations ($a_{t}, a_{t+k}$), the temporal decay in $r_k \sim m^k = e^{-k \, \Delta t \, / \, \tau}$ remains unaffected, allowing for a consistent estimate of $m$ even when sampling only a single unit (Fig. \ref{fig:branching_cartoon}\textbf{d}).
Particularly close to $m=1$ the autocorrelation time $\tau = - \Delta t \, / \, \log m $ diverges, which is known as critical slowing down \autocite{Scheffer2012}. Because of this divergence, MR estimation can resolve the distance to criticality in this regime with high precision.

The MR estimator is consistent under subsampling, because the system itself evolves independently of the sampling process:
While subsampling biases each regression $r_k$ by decreasing the mutual dependence between subsequent observations ($a_{t}, a_{t+k}$), the temporal decay in $r_k \sim m^k = e^{-k \, \Delta t \, / \, \tau}$ remains unaffected. Here, $\tau = - \Delta t \, / \, \log m $ refers to the autocorrelation time of stationary (subcritical) processes, where autocorrelation and regression $r_k$ are equal, and $\Delta t$ is the time scale of the investigated process. Thus for subcritical PARs, subsampling decreases the autocorrelation \textit{strength} $r_k$, while the autocorrelation \textit{time} $\tau$ is preserved.    
Making use of this result allows for a consistent estimate of $m$ even when sampling only a single unit (Fig. \ref{fig:branching_cartoon}\textbf{d}).


PARs are typically only a first order approximation of real world event propagation.
However, their mathematical structure allowed for an analytical derivation of the subsampling bias and the consistent estimator.
To show that the MR estimator returns correct results also for more complex systems, we applied it to more complex simulated systems:
a branching network \autocite{Haldeman2005} (BN) and the non-linear Bak-Tang-Wiesenfeld model \autocite{Bak1987} (BTW).
In contrast to generic PARs, these models (a) run on recurrent networks and (b) are of finite size. 
In addition, the second model shows (c) completely deterministic propagation of activity instead of the stochastic propagation that characterizes PARs, and (d) the activity of each unit depends on many past time steps, not only one.
Both models approximate neural activity propagation in cortex \autocite{Beggs2003, Haldeman2005,Priesemann2009, Ribeiro2010, Priesemann2013, Priesemann2014}.
For both models the numerical estimates of $m$ were precisely biased as analytically predicted, although the models are only approximated by a PAR (dashed lines in Fig. \ref{fig:branching_cartoon}\textbf{c}, Eq. \ref{eq:hypergeometric_bias}). 
The bias is considerable: For example, sampling 10\% or 1\% of the neurons in a BN with $m=0.9$ resulted in the estimates $\sh_\mathrm{C}=r_1 = 0.312$,  or even $\sh_\mathrm{C}=0.047$, respectively.
Thus a process fairly close to instability ($m=0.9$) is mistaken as Poisson-like ($\mh_\mathrm{C}=0.047 \approx 0$) just because sampling is constrained to 1\% of the units.
Thereby the risk that systems may develop instabilities is severely underestimated.

MR estimation is readily applicable to subsampled data, because it only requires a sufficiently long time series $a_{t}$, and the assumption that in expectation $a_{t}$ is proportional to $A_{t}$.
Hence, in general it suffices to sample the system randomly, without even knowing the system size $N$, the number of sampled units $n$, or any moments of the underlying process.
Importantly, one can obtain a consistent estimate of $m$, even when sampling only a very small fraction of the system, under homogeneity even when sampling only one single unit (Figs. \ref{fig:branching_cartoon}\textbf{c},\textbf{d}, Fig. \ref{fig:supp_cat_single_electrodes}). This robustness makes the estimator readily applicable to any system that can be approximated by a PAR.
We demonstrate the bias of conventional estimation and the robustness of MR estimation at the example of two real-world applications.


\paragraph{Application to disease case reports.}


We used the MR estimator to infer the ``reproductive number'' $m$ from incidence time series of different diseases \autocite{Diekmann1990}.
Disease propagation represents a nonlinear, complex, real-world system often approximated by a PAR \autocite{Earn2000,Brockmann2006}.
Here, $m$ determines the disease spreading behavior and has been deployed to predict the risk of epidemic outbreaks \autocite{Farrington2003}.
However, the problem of subsampling or \textit{under}-\linebreak \textit{ascertainment} has always posed a challenge \autocite{Papoz1996, Hauri2011}.

As a first step, we cross-validated the novel against the conventional estimator using the spread of measles in Germany, surveyed by the Robert-Koch-Institute (RKI). We chose this reference case, because we expected case reports to be almost fully sampled owing to the strict reporting policy supported by child care facilities and schools \autocite{Hellenbrand2003, Wichmann2009}, and to the clarity of symptoms. 
Indeed, the values for $\mh$ inferred with the conventional and with the novel estimator, coincided (Fig. \ref{fig:diseases}\textbf{d}, \ref{sec:supp_epidemiology}). 
In contrast, after applying artificial subsampling to the case reports, thereby mimicking that each infection was only diagnosed and reported with probability $\alpha < 1$, the conventional estimator severely underestimated the spreading behavior, while MR estimation always returned consistent values (Fig. \ref{fig:diseases}\textbf{d}).
This shows that the MR estimator correctly infers the reproductive number $m$ directly from subsampled time series, without the need to know the degree of under-ascertainment $\alpha$.

Second, we evaluated worldwide measles case and vaccination reports for 124 countries provided by the WHO since 1980 (Fig. \ref{fig:diseases}\textbf{a}, \ref{sec:supp_epidemiology}), because the vaccination percentage differs in each country, and this is expected to impact the spreading behavior through $m$. 
The  reproductive numbers $\mh$ ranged between 0 and 0.93, and in line with our prediction clearly decreased with increasing vaccination percentage in the respective country (Spearman rank correlation: $r = -0.342, p < 10^{-4}$).

Third, we estimated the reproductive numbers for three diseases in Germany with highly different infectiousness: noroviral infection \autocite{Hauri2011,Bernard2014}, measles, and invasive meticillin-resistant Staphylococcus aureus (MRSA, an antibiotic-resistant germ classically associated with health care facilities \autocite{Boucher2008}, Figs. \ref{fig:diseases}\textbf{b},\textbf{c}),
and quantified their propagation behavior.
MR estimation returned the highest $\hat{m} = 0.98$ for norovirus, compliant with its high infectiousness \autocite{Teunis2008}.
For measles we found the intermediate $\hat{m} = 0.88$, reflecting the vaccination rate of about 97\%.
For MRSA we identified $m=0$, confirming that transmission is still minor in Germany \autocite{Kock2011}.
However, a future increase of transmission is feared and would pose a major public health risk \autocite{DeLeo2010}.
Such an increase could be detected by our estimator, even in countries where case reports are incomplete.

\paragraph{Reverberating spiking activity \textit{in vivo}}

\begin{figure*}
 \centering
\includegraphics[width=\textwidth]{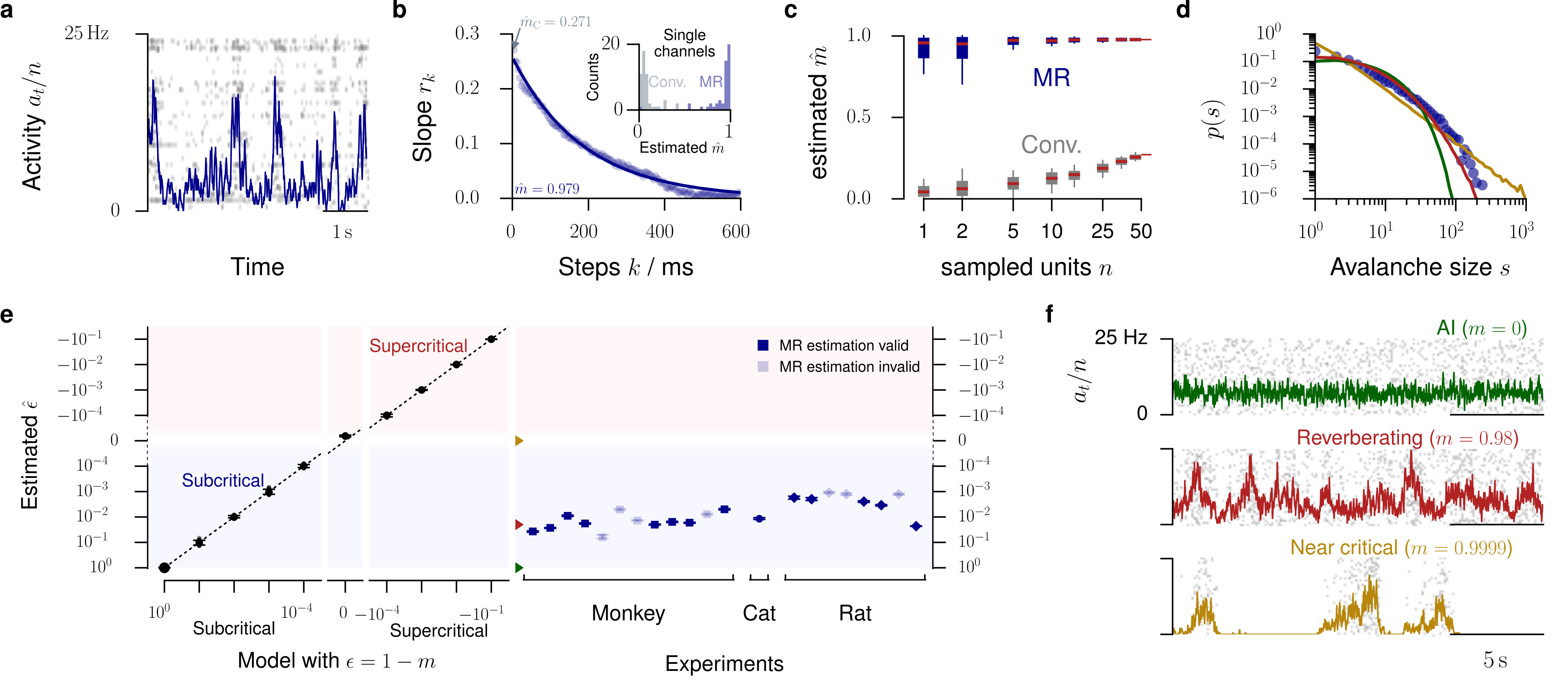}
 \caption{\textbf{Animal spiking activity \textit{in vivo}.} 
In neuroscience, $m$ denotes the mean number of spikes triggered by one spike.
We estimated $\mh$ from spiking activity recorded \textit{in vivo} in monkey prefrontal cortex, cat visual cortex, and rat hippocampus.
\textbf{a}. Raster spike plot and population rate $a_{t}$ of 50 single units illustrated for cat visual cortex.
\textbf{b}. MR estimation based on the exponential decay of the autocorrelation of $r_k$ of $a_{t}$. Inset: Comparison of conventional and MR estimation results for single units (medians $\mh_\mathrm{C} = 0.057$ and $\mh = 0.954$ respectively).
\textbf{c} $\mh$ estimated from from further subsampled cat recordings, estimated with the conventional and MR estimator. Error bars indicate variability over 50 randomly subsampled $n$ out of the recorded 50 channels.
\textbf{d} Avalanche size distributions for cat visual cortex (blue) and the networks with AI, reverberating and near-critical dynamics in panel \textbf{f}.
\textbf{e}. For all simulations, MR estimation returned the correct distance to instability (criticality) $\epsilon = 1 - m$ (\ref{sec:supp_models}). \textit{In vivo} spike recordings from rat, cat, and monkey, clearly differed from critical ($\epsilon = 0$) and AI ($\epsilon=1$) states (median $\sh = 0.98$, error bars: 16\% to 84\% confidence intervals, note that some confidence intervals are too small to be resolved). Opaque symbols indicate that MR estimation was rejected (Fig. \ref{fig:supp_animal_data}, \ref{sec:supp_poisson}). Green, red, and yellow arrows indicate $\epsilon$ for the dynamic states shown in panel \textbf{f}.
\textbf{f}. Population activity and raster plots for AI activity, reverberating, \textit{in vivo}-like, and near critical networks. All three networks match the recording from cat visual cortex with respect to number of recorded neurons and mean firing rate.
}
\label{fig:animals}
\end{figure*}

We applied the MR estimator to cortical spiking activity \textit{in vivo} to investigate two contradictory hypothesis about the collective spiking dynamics.
One hypothesis suggests that the collective dynamics is ``asynchronous irregular'' (AI) \autocite{Burns1976,Softky1993,Steveninck1997,Ecker2010}, i.e. neurons spike independently of each other and in a Poisson manner ($m=0$), which may reflect a balanced state \autocite{Vreeswijk1996a,Brunel2000,Renart2010}.
The other hypothesis suggests that neuronal networks operate at criticality ($m=1$) \autocite{Priesemann2009,Beggs2003,Levina2007,Chialvo2010,Tkacik2014,Humplik2017}, thus in a particularly sensitive state close to a phase transition. These different hypotheses have distinct implications for the coding strategy of the brain:
Criticality is characterized by long-range correlations in space and time, and in models optimizes performance in tasks that profit from  long reverberation of the activity in the network \autocite{Haldeman2005,Kinouchi2006,Boedecker2012,Shew2013,DelPapa2017}.
In contrast, the typical balanced state minimizes redundancy \autocite{Hyvarinen2000} and supports fast network responses \autocite{Vreeswijk1996a}.

Analyzing \textit{in vivo} spiking activity from Macaque monkey prefrontal cortex during a memory task, anesthetized cat visual cortex with no stimulus (Figs. \ref{fig:animals}\textbf{a},\textbf{b}), and rat hippocampus during a foraging task (\ref{sec:supp_animals}) returned $\sh$ to be between $0.963$ and $0.998$ (median $\sh=0.984$, Figs. \ref{fig:animals}\textbf{e}, \ref{fig:supp_animal_data}), corresponding to autocorrelation times between \SI{100}{ms} and \SI{2000}{ms}. 
This clearly suggests that spiking activity \textit{in vivo} is neither AI-like ($m=0$), nor consistent with a critical state ($m=1$), but in a reverberating state that shows autocorrelation times of a few hundred milliseconds.
We call the range of the dynamical states found in vivo \textit{reverberating}, because input reverberates for a few hundred millisecond in the network, and therefore enables integration of information \autocite{Murray2014a,Chaudhuri2015,Jaeger2004}. Thereby the reverberating state constitutes a specific narrow window between AI state, where perturbations of the firing rate are quenched immediately,  and the critical state, in which perturbations can in principle persist infinitely long (for more details, see \textcite{Wilting2018}).

We demonstrate the robustness to subsampling for the activity in cat visual cortex: we chose random subsets of $n$ neurons from the total of 50 recorded single units.
For any subset, even for single neurons, MR estimation returned about the same median $\mh$ (Fig. \ref{fig:animals}\textbf{c}).
In contrast, the conventional estimator misclassified neuronal activity by strongly underestimating $\mh$: instead of $\mh = 0.984$, it returned $\mh_\mathrm{C} = 0.271$ for the activity of all 50 neurons.
This underestimation gets even more severe when considering stronger subsampling ($n < 50$, Fig. \ref{fig:animals}\textbf{c}).
Ultimately, for single neuron activity, the conventional estimator returned $\hat{m}_\mathrm{C} = 0.057 \approx 0$, which would spuriously indicate dynamics close to AI instead of the reverberating state (inset of Fig. \ref{fig:animals}\textbf{b}, Figs. \ref{fig:animals}\textbf{c} and \ref{fig:supp_cat_single_electrodes}).
The underestimation of $\mh_\mathrm{C}$ was present in all experimental recordings ($r_1$ in Fig. \ref{fig:supp_animal_data}).

On first sight, $\sh=0.984$ may appear close to the critical state, particularly as physiologically a 1.6\% difference to $m=1$ is small in terms of the effective synaptic strength. 
However, this seemingly small difference in single unit properties has a large impact on the \emph{collective} dynamics and makes AI, reverberating, and critical states clearly distinct.
This distinction is readily manifest in the fluctuations of the population activity (Fig. \ref{fig:animals}\textbf{f}).
Furthermore, the distributions of avalanche sizes clearly differ from the power-law  scaling expected for critical systems \autocite{Beggs2003}, but are well captured by a matched, reverberating model (Fig. \ref{fig:animals}\textbf{d}).
Because of the large difference in the network dynamics, the MR estimator can distinguish AI, reverberating, and critical states with the necessary precision.
In fact, the estimator would allow for 100 times higher precision when distinguishing critical from non-critical states, assuming \textit{in vivo}-like subsampling and mean firing rate (sampling $n=100$ from $N = 10^4$  neurons, Fig. \ref{fig:animals}\textbf{e}).
With larger $N$, this discrimination becomes even more sensitive (detailed error estimates: Fig. \ref{fig:supp_variance_estimation} and \ref{sec:supp_variance}).
As the number of neurons in a given brain area is typically much higher than $N=10^4$ in the simulation, finite size effects are not likely to account for the observed deviation from criticality $\epsilon = 1-m \approx 10^{-2}$ \textit{in vivo}, supporting that in rat, cat, and monkey the brain does \emph{not} operate in a critical state.
Still, additional factors like input or refractory periods may limit the maximum attainable $m$ to quasi-critical dynamics on a Widom line \cite{Williams-Garcia2014}, which could in principle conform with our results.

Most real-world systems, including disease propagation or cortical dynamics, are more complicated than a simple PAR.
For cortical dynamics, for example, heterogeneity of neuronal morphology and function, non-trivial network topology, and the complexity of neurons themselves are likely to have a profound impact onto the population dynamics \autocite{Marom2010}.
In order to test for the applicability of a PAR approximation, we defined a set of conservative tests (\ref{sec:supp_poisson}) and included only those time series, where the approximation by a PAR was considered appropriate.
For example, we excluded all recordings that showed an offset in the slopes $r_k$, because this offset is, strictly speaking, not explained by a PAR and might indicate non-stationarities (Fig. \ref{fig:supp_monkey_nonstationaritites}).
Even with these conservative tests, we found the exponential relation $r_k = b \, m^k$ expected for PARs in the majority of real-world time series
(Fig. \ref{fig:supp_animal_data}, \ref{sec:supp_epidemiology}). 
This shows that a PAR is a reasonable approximation for dynamics as complex as cortical activity or disease propagation. With using PARs, we draw on the powerful advantage of analytical tractability, which allowed for valuable insight into dynamics and stability of the respective system. It is then a logical next step to refine the model by including additional relevant parameters \autocite{Eckmann2007}.
However, the increasing richness of detail typically comes at the expense of analytical tractability.

By employing for the first time a consistent, quantitative estimation, we provided evidence that \textit{in vivo} spiking population dynamics reflects a stable, fading reverberation state around $m=0.98$ universally across different species, brain areas, and cognitive states.
Because of its broad applicability, we expect that besides the questions investigated here, MR estimation can substantially contribute to the understanding of real-world dynamical systems in diverse fields of research where subsampling prevails.

\beginsupplement

\onecolumn

\setcounter{page}{1}

\section*{Supplementary material}

\subsection{Applicability of MR estimation} 
\label{sec:supp_applicability}
We here analytically derive the novel MR estimator for branching processes (BP) \autocite{Harris1963,Heathcote1965,Pakes1971}. We expect that analogous derivations apply to any process with a first order autoregressive representation (PAR) \autocite{Ispany2010}, because these processes fulfill Eq. \eqref{eq:PAR_equation}.
Beside BPs, PARs  include  autoregressive AR(1) processes, integer-valued autoregressive INAR(1) processes \autocite{Alzaid1990} rounded integer-valued autoregressive RINAR(1) processes \autocite{Kachour2009}, and Kesten processes \autocite{Kesten1973a}.


We emphasize that the MR estimator only requires the subsampled recording $a_{t}$ of a system with full activity $A_{t}$ conforming with the definition below.
It is not necessary to know either the full system size, the number of subsampled units, nor any of the moments of the full process $A_{t}$.

\subsection{Branching processes}
\label{sec:supp_bps}
In a branching process (BP) with immigration \autocite{Harris1963,Heathcote1965,Pakes1971} each unit $i$ produces a random number $y_{t,i}$ of units in the subsequent time step.
Additionally, in each time step a random number $h_t$ of units immigrates into the system (drive).  
Mathematically, BPs are defined as follows \autocite{Harris1963, Heathcote1965}: 
Let $y_{t,i}$ be independently and identically distributed non-negative integer-valued random variables following a law $\mathcal{Y}$ with mean $m = \E{\mathcal{Y}}$ and variance $\sigma^2 = \Var{\mathcal{Y}}$.
Further, $\mathcal{Y}$ shall be non-trivial, meaning it satisfies $\Prob{\mathcal{Y} = 0} > 0$ and $\Prob{\mathcal{Y} = 0} + \Prob{\mathcal{Y} = 1} < 1$.
Likewise, let $h_t$ be independently and identically distributed non-negative integer-valued random variables following a law $\mathcal{H}$ with mean rate $h = \E{\mathcal{H}}$ and variance $\xi^2 = \Var{\mathcal{H}}$.
Then the evolution of the BP $A_{t}$ is given recursively by

\begin{equation}
A_{t+1} = \sum_{i = 1}^{A_{t}} y_{t,i} + h_t,
\label{eq:supp_branching_process}
\end{equation}
i.e. the number of units in the next generation is given by the offspring of all present units and those that were introduced to the system from outside.

The stability of BPs is solely governed by the mean offspring $m$. In the subcritical state, $m<1$, the population converges to a stationary distribution $A_\infty$ with mean $\E{A_\infty} = h / (1 - m)$.
At criticality ($m=1$), $A_{t}$ asymptotically exhibits linear growth, while in the supercritical state ($m>1$) it grows exponentially.
We will first show results that further specify the mean and variance of subcritical branching processes.

\begin{lemma}
\label{theorem:exp_and_var}
The stationary distribution of a subcritical BP satisfies

\begin{align*}
\E{A_\infty} = \frac{h}{1-m}, \qquad \Var{A_\infty} = \frac{1}{1-m^2} \left( \xi^2 + \sigma^2 \frac{h}{1-m} \right),
\end{align*}

\noindent
where $m$, $\sigma^2$, $h$, and $\xi^2$ are defined as above. 
\end{lemma}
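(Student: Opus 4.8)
The plan is to read both moments directly off the defining recursion \eqref{eq:supp_branching_process} by conditioning on $A_{t}$ and then passing to the stationary limit. First I would treat the mean. Taking the conditional expectation of \eqref{eq:supp_branching_process} given $A_{t}$ and using that, conditionally on $A_{t}$, the sum $\sum_{i=1}^{A_{t}} y_{t,i}$ consists of $A_{t}$ i.i.d.\ copies of $\mathcal{Y}$ (each of mean $m$) while $h_{t}$ is independent of $A_{t}$ with mean $h$, linearity of (conditional) expectation gives $\E{A_{t+1}|A_{t}} = m\,A_{t} + h$. The tower property then yields the affine recursion $\E{A_{t+1}} = m\,\E{A_{t}} + h$; since $m<1$ this is a contraction with unique fixed point $\E{A_\infty} = h/(1-m)$, and iterating from any initial condition of finite mean makes the convergence explicit.

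For the variance I would apply the law of total variance, $\Var{A_{t+1}} = \E{\Var{A_{t+1}\mid A_{t}}} + \Var{\E{A_{t+1}\mid A_{t}}}$. Conditionally on $A_{t}$, the sum $\sum_{i=1}^{A_{t}} y_{t,i}$ is a sum of $A_{t}$ independent terms of variance $\sigma^{2}$ each, and $h_{t}$ contributes an independent variance $\xi^{2}$, so $\Var{A_{t+1}\mid A_{t}} = \sigma^{2} A_{t} + \xi^{2}$; from the first step $\E{A_{t+1}\mid A_{t}} = m A_{t} + h$, whose variance is $m^{2}\Var{A_{t}}$. Combining, $\Var{A_{t+1}} = m^{2}\Var{A_{t}} + \sigma^{2}\E{A_{t}} + \xi^{2}$. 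Passing to the stationary limit, inserting $\E{A_\infty} = h/(1-m)$, and solving the fixed-point equation $(1-m^{2})\Var{A_\infty} = \xi^{2} + \sigma^{2} h/(1-m)$ produces the claimed formula.

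The main obstacle is not the algebra but justifying that the stationary distribution $A_\infty$ exists and has finite first and second moments, so that taking limits in the two affine recursions is legitimate. I would handle this by invoking the classical theory of subcritical branching processes with immigration \autocite{Harris1963,Heathcote1965,Pakes1971}: under $m<1$ together with the finite-variance assumptions on $\mathcal{Y}$ and $\mathcal{H}$, $A_{t}$ converges in distribution to a proper limit $A_\infty$ with finite second moment. Alternatively, and more self-containedly, one can argue purely at the level of moments: starting from $A_{0}=0$, the sequences $\E{A_{t}}$ and $\Var{A_{t}}$ obey the affine recursions above with contraction factors $m<1$ and $m^{2}<1$ respectively, hence converge monotonically to the stated values, which are then identified with the moments of $A_\infty$. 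A one-line remark that stationarity of the second moment requires $m^{2}<1$ (automatic under subcriticality) closes the argument.
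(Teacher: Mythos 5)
Your proposal is correct and follows essentially the same route as the paper: taking expectations of the defining recursion for the mean, and the law of total variance with $\Var{A_{t+1}\mid A_{t}} = \sigma^{2}A_{t} + \xi^{2}$ for the second moment, then solving the stationary fixed-point equations. Your additional care in justifying the existence and finite moments of $A_\infty$ (via the classical theory or the contraction argument on the moment recursions) is a welcome strengthening of a step the paper passes over with a bare appeal to stationarity.
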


\begin{proof}
The first result was stated before \autocite{Heathcote1965, Heyde1972} and follows from taking expectation values of both sides of Eq. \eqref{eq:supp_branching_process}: $\E{A_{t+1}} = m \E{A_{t}} + h$. Because of stationarity $\E{A_{t+1}} = \E{A_{t}} = \E{A_\infty}$ and the result follows easily.
For the second result, observe that by the theorem of total variance, $\Var{A_{t+1}} = \E{\Var{A_{t+1} \cond A_{t}}} + \Var{\E{A_{t+1} \cond A_{t}}}$, where $\E{\cdot}$ denotes the expected value, and $A_{t+1} \cond A_{t}$ conditioning the random variable $A_{t + 1}$ on $A_{t}$.
Because $A_{t+1}$ is the sum of independent random variables, the variances also sum: $ \Var{A_{t+1} \cond A_{t}} = \sigma ^2 \, A_{t} + \xi^2$.
Using the result for $\E{A_\infty}$ one then obtains

\begin{align}
\Var{A_{t+1}} = \xi^2 + \sigma ^2 \frac{h}{1-m} + \Var{ m A_{t} + h} =  \xi^2 + \sigma ^2 \frac{h}{1-m} + m^2 \Var{A_{t}}.
\end{align}

\noindent
Again, in the stationary distribution $\Var{A_{t+1}} = \Var{A_{t}} = \Var{A_\infty}$ and hence the stated result follows.
\end{proof}

\subsection{Subsampling}
\label{sec:supp_subsampling}
To derive the MR estimator for subsampled data, subsampling is implemented in a parsimonious way, according to the following definition: 

\begin{definition}[Subsampling]
\label{def:subsampling}
Let $\lbrace A_{t}\rbrace_{t\in \BbbN}$ be a BP and $\lbrace a_{t}\rbrace_{t\in \BbbN}$ a sequence of random variables.
Then $\lbrace a_{t}\rbrace_{t\in \BbbN}$ is called a subsampling of $\lbrace A_{t}\rbrace_{t\in \BbbN}$ if it fulfills the following three conditions:
\begin{enumerate}[(i)]
\item Let $t^\prime, t \in \BbbN$, $t^\prime \neq t$. Then the conditional random variables\footnotemark[2] $(a_{t} | A_{t} = j)$ and $(a_{t^\prime} | A_{t^\prime} = l)$ are independent for any outcome $j, l \in \BbbN$ of $A_t, A_{t'}$. 
 If $A_{t} = A_{t^\prime}$ then $(a_{t} | A_{t} = j)$ and $(a_{t^\prime} | A_{t^\prime} = j)$ are identically distributed.
\item Let $t \in \BbbN$. Conditioning on $a_{t}$ does not add further information to the process: The two random variables $(A_{t+1} \cond A_{t} = j, a_{t} = l)$ and  $(A_{t+1} \cond A_{t}=j)$ are identically distributed for any $j,l \in \BbbN$.
\item There are constants $\alpha, \, \beta \in \BbbR$, $\alpha \neq 0$, such that $\E{a_{t} \cond A_{t} = j} = \alpha j + \beta$ for all $t,j \in \BbbN$. 
\end{enumerate}
\end{definition}

\footnotetext[2]{
Throughout this manuscript, the conditional random variable $(a_{t} | A_{t} = j)$ is to be read as ``$a_t$ given the realization $A_t = j$ of the random variable $A_t$''.}

\noindent
Thus the subsample $a_{t}$ is constructed from the full process $A_{t}$ based on the three assumptions: 
(i) The sampling process does not interfere with itself, and does not change over time. Hence the realization of a subsample at one time does not influence the realization of a subsample at another time, and the conditional \textit{distribution} of $(a_{t}|A_{t})$ is the same as $(a_{t^\prime}|A_{t^\prime})$ if $A_{t} = A_{t^\prime}$. However, even if $A_{t} = A_{t^\prime}$, the subsampled $a_{t}$ and $a_{t'}$ do not necessarily take the same value.
(ii) The subsampling does not interfere with the evolution of $A_{t}$, i.e. the process evolves independent of the sampling. (iii) \textit{On average} $a_{t}$ is proportional to $A_{t}$ up to a constant term. 

It will be shown later, that the novel estimator is applicable to any time series $a_{t}$ that was acquired from a BP conforming with this definition of subsampling.
We will demonstrate possible applications at the hand of two examples:

\paragraph{1. Diagnosing infections with probability $\alpha$.}
For example, when a BP $A_{t}$ represents the spread of infections within a population, each infection may be diagnosed with probability $\alpha \leq 1$, depending on the sensitivity of the test and the likelihood that an infected person consults a doctor.
If each of the $A_{t}$ infections is diagnosed independently of the others, then the number of diagnosed cases $a_{t}$ follows a binomial distribution $a_{t} \sim \mathrm{Bin}(A_{t}, \alpha)$.
Then $\E{a_{t} | A_{t} = j} = \alpha \, j$ is given by the expected value of the binomial distribution.
This implementation of subsampling conforms with the definition above, with the sampling probability $\alpha$ and the constant in (iii) being identical here.

\paragraph{2. Sampling a subset of system components.} In a different application, assume a high-dimensional system of interacting units that forms the substrate on which activation propagates.
Often, the states of a subset of units are observed continuously, for example by placing electrodes that record the activity of the same set of neurons over the entire recording (Fig. \ref{fig:branching_cartoon}\textbf{b}).
This implementation of subsampling in finite size systems is mathematically approximated as follows: 
If $n$ out of all $N$ model units are sampled, the probability to sample $a_{t}$ active units out of the actual $A_{t}$ active units follows a hypergeometric distribution, $a_{t} \sim \mathrm{Hyp}(N, n, A_{t})$.
As $\E{a_{t} \cond A_{t} = j} = j \, n \, / \, N$, this representation satisfies Def. \ref{def:subsampling} with $\alpha = n \, / \, N$.
Choosing this special implementation of subsampling allows to evaluate $\Var{a_{t}}$ further in terms of $A_{t}$:

\begin{align}
\Var{a_{t}} = & \: \E{\Var{a_{t} \cond A_{t}}} + \Var{\E{a_{t} \cond A_{t}}} \nonumber \\
= & \: n \E{ \frac{A_{t}}{N}\frac{N - A_{t}}{N} \frac{N-n}{N-1} } + \Var{\frac{n}{N}A_{t}} \nonumber \\
= & \: \frac{1}{N}\frac{n}{N}\frac{N-n}{N-1} \left( N \, \E{A_{t}} - \E{A_{t}^2} \right) + \frac{n^2}{N^2} \Var{A_{t}} \nonumber \\
= & \: \frac{n}{N^2}\frac{N-n}{N-1} \left( N \, \E{A_{t}} - \E{A_{t}}^2 \right) + \left( \frac{n^2}{N^2} -  \frac{n}{N^2}\frac{N-n}{N-1} \right) \Var{A_{t}}.
\label{eq:subsampled_variance}
\end{align}

\noindent
This expression precisely determines the variance $\Var{a_{t}}$ under subsampling from the properties $\E{A_{t}}$ and $\Var{A_{t}}$ of the full process (which for BPs are known from Lemma \ref{theorem:exp_and_var}), and from the parameters of subsampling $n$ and $N$.
Using Eq. \eqref{eq:subsampled_variance}, we could predict the linear regression slopes $\hat{r}_k$ under subsampling (Theorem \ref{theorem:mlr_under_subsampling}, Eq. \eqref{eq:slopes_final}) in more detail:


\begin{equation}
r_k = \alpha^2 \frac{\Var{A_{t}}}{\Var{a_{t}}} m^k =  \frac{n (N-1) \Var{A_{t}}}{(N-n) ( N\E{A_{t}} - \E{A_{t}}^2 ) + (nN - N )  \Var{A_{t}}} m^k =: b(N, n, \E{A_{t}}, \Var{A_{t}}) \, m^k.
\label{eq:hypergeometric_bias}
\end{equation}

\noindent
The term $b = b(N, n, \E{A_{t}}, \Var{A_{t}})$ is constant when subsampling a given (stationary) system, and quantifies the factor by which $\sh_\mathrm{C}$ is biased when using the conventional estimate for $m$. It depends on $N$, $n$ and the first two moments of $A_{t}$ and is thus known for a BP. This relation was used for Fig. \ref{fig:branching_cartoon}\textbf{c}.


\subsection{MR estimation}
\label{sec:supp_mr}

We here derive an estimator for the mean offspring $m$ based on the autoregressive representation of the BP,

\begin{equation}
\E{A_{t+1} \cond A_{t} = j} = m \, j + h.
\label{eq:PAR_equation}
\end{equation}

\noindent
This novel estimator is based on multistep regressions \autocite{Statman2014} (MR estimator), which generalize \eqref{eq:PAR_equation} to arbitrary time steps $k$. From iteration of Eq. \eqref{eq:PAR_equation}, it is easy to see that

\begin{equation}
\E{A_{t+k} \cond A_{t} = j} = m^k \, j + h \frac{1-m^k}{1-m}.
\label{eq:PAR_equation_2}
\end{equation}

\begin{definition}[Multistep regression estimator]
\label{def:mr_estimator}
Consider a subsampled BP $\lbrace a_{t} \rbrace$ of length $T$. Let $k_\mathrm{max} \in \BbbN$, $k_\mathrm{max} \geq 2$.
Then multistep regression (of $k_\mathrm{max} $-th order) estimates $m$ in the following way:
\begin{enumerate}
\item For $k=1, \ldots, k_\mathrm{max} $, estimate the slope $\hat{r}_k$ and offset $\hat{s}_k$ of linear regression between the pairs $\lbrace (a_{t},\, a_{t+k}) \rbrace_{t=0}^{T-k}$, e.g. by least square estimation (Fig. \ref{fig:branching_cartoon}\textbf{e}), i.e. by minimizing the residuals
\begin{equation}
R_k(\hat{r}_k, \hat{s}_k) = \sum_{t} \left( a_{t + k} -  (\hat{r}_k \cdot a_t + \hat{s}_k ) \right)^2.
\label{eq:slopes_residual}
\end{equation}

\item Based on the relation \autocite{Statman2014} $r_k = b \cdot m^k$, estimate $\hat{b}$ and $\sh$ by minimizing the sum of residuals 

\begin{equation}
R(\hat{b}, \sh) = \sum_{k=1}^{k_\mathrm{max}} \left( \hat{r}_k -  \hat{b} \cdot \sh^k \right)^2,
\label{eq:mr_residual}
\end{equation}

with the collection of slopes  $\lbrace \hat{r}_k \rbrace_{k=1}^{k_\mathrm{max}}$ obtained from step 1 (Fig. \ref{fig:branching_cartoon}\textbf{f}).

\end{enumerate}
Then $\sh$ is the multistep regression (MR) estimate of the mean offspring $m$. For the application to experimental data, we further applied tests to identify nonstationarities (\ref{sec:supp_poisson}).
\end{definition}

We first prove that the MR estimator is consistent in the fully sampled case, and will then show the consistency under subsampling.
First, we need the following result about the individual linear regression slopes $\hat{r}_k$ under full sampling:

\begin{lemma}
\label{lemma:mlr_slopes}
The slope $\hat{r}_k$, obtained from $A_{t}$ under full sampling, is a consistent estimator for $m^k$. If the process is subcritical, then the offset $\hat{s}_k$ is also a consistent estimator for $h \frac{1-m^k}{1-m}$.
\end{lemma}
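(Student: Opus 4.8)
The plan is to exploit the standard characterization of the ordinary least-squares slope and intercept as ratios of empirical moments, and then invoke a law of large numbers / ergodic argument for the stationary process $A_t$. Concretely, the least-squares slope minimizing \eqref{eq:slopes_residual} is
\begin{equation*}
\hat{r}_k = \frac{\widehat{\mathrm{Cov}}(A_t, A_{t+k})}{\widehat{\Var{A_t}}},
\end{equation*}
where the hats denote empirical (sample) covariance and variance computed from the finite trajectory of length $T$, and the intercept is $\hat{s}_k = \widebar{A_{t+k}} - \hat{r}_k \,\widebar{A_t}$ with bars denoting sample means. So the first step is to write these estimators out explicitly and reduce the consistency claim to: (a) the empirical first and second moments of $A_t$, and the empirical lag-$k$ cross moment $\widehat{\E{A_t A_{t+k}}}$, converge (in probability, or a.s.) to their population counterparts as $T\to\infty$; and (b) the population quantities satisfy $\mathrm{Cov}(A_t,A_{t+k})/\Var{A_t} = m^k$ and $\E{A_\infty} - m^k\E{A_\infty} = h\frac{1-m^k}{1-m}$.

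For part (b), the covariance identity is the heart of the matter and follows from Eq. \eqref{eq:PAR_equation_2} together with the tower property: conditioning on $A_t$,
\begin{equation*}
\E{A_t A_{t+k}} = \E{A_t\,\E{A_{t+k}\cond A_t}} = m^k \E{A_t^2} + h\frac{1-m^k}{1-m}\E{A_t},
\end{equation*}
so that $\mathrm{Cov}(A_t,A_{t+k}) = \E{A_tA_{t+k}} - \E{A_t}\E{A_{t+k}} = m^k\big(\E{A_t^2}-\E{A_t}^2\big) = m^k\Var{A_\infty}$, using also that $\E{A_{t+k}} = \E{A_t} = \E{A_\infty}$ in stationarity. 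Dividing by $\Var{A_\infty}$ (which is finite and nonzero for a subcritical nontrivial BP by Lemma \ref{theorem:exp_and_var}, since $\sigma^2>0$ or $h>0$) gives the slope limit $m^k$; plugging back into the intercept formula and using $\widebar{A_t}, \widebar{A_{t+k}} \to \E{A_\infty}$ gives $\hat{s}_k \to \E{A_\infty}(1-m^k) = \frac{h}{1-m}(1-m^k)$, as claimed. Note the slope statement does not need subcriticality per se — only that $\Var{A_\infty}$ is finite and positive — but the intercept statement does, since for $m\geq 1$ there is no stationary distribution.

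For part (a), the ergodic-averaging step, I would argue that a subcritical BP with immigration is a positive-recurrent Markov chain on $\BbbN$ (or, if one prefers, a geometrically ergodic one under mild moment conditions on $\mathcal{Y}$ and $\mathcal{H}$), hence its stationary version is ergodic, and Birkhoff's ergodic theorem gives a.s. convergence of the empirical averages of $A_t$, $A_t^2$, and $A_tA_{t+k}$ to the corresponding expectations under the stationary law — provided those expectations are finite, i.e. provided $\E{A_\infty^2}<\infty$, which again holds by Lemma \ref{theorem:exp_and_var} (this is where a finite $\sigma^2$ and $\xi^2$ enter). I expect this ergodicity/moment bookkeeping to be the main technical obstacle: one must either cite a suitable ergodic theorem for branching-process-with-immigration chains or establish the requisite mixing/recurrence directly, and one must be slightly careful that the process starting from an arbitrary (non-stationary) initial condition still has empirical averages converging to the stationary expectations — which follows from convergence to stationarity plus the ergodic theorem, or can be folded into a single law-of-large-numbers statement for asymptotically stationary chains. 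Once that is in hand, continuity of $(x,y,z)\mapsto (z-\bar{}\,)/(y-\bar{}\,)$ at the limit point (valid because $\Var{A_\infty}\neq 0$) and the continuous mapping theorem finish the proof.
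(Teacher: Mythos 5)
Your route is genuinely different from the paper's. You express $\hat r_k$ as the ratio of the empirical lag-$k$ covariance to the empirical variance, compute the population covariance $\mathrm{Cov}(A_t,A_{t+k})=m^k\Var{A_\infty}$ by the tower property applied to Eq.~\eqref{eq:PAR_equation_2}, and then appeal to ergodicity of the stationary chain to pass from empirical to population moments. The paper instead fixes an offset $i$ and views the time-decimated process $A^{(k,i)}_{t'}=A_{i+kt'}$ as satisfying a linear \emph{stochastic regression equation} $A^{(k,i)}_{t'}=m^k A^{(k,i)}_{t'-1}+h\tfrac{1-m^k}{1-m}+\epsilon^{(k,i)}_{t'}$ with a martingale difference sequence $\epsilon^{(k,i)}_{t'}$, and then imports the consistency results of Heyde--Seneta, Venkataraman, and Wei--Winnicki for least squares in that framework. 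Your covariance identity is exactly the computation the paper itself performs later in the proof of Theorem~\ref{theorem:mlr_under_subsampling}, so part (b) of your argument is sound and arguably more transparent; what the paper's machinery buys is independence from any stationarity assumption.

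That is also where your proposal has a genuine gap. The first sentence of the lemma claims consistency of $\hat r_k$ for $m^k$ without restricting to $m<1$, and the paper explicitly needs and proves this for critical and supercritical processes (``In the critical and supercritical cases, only $\hat r^{(k,i)}\convergesinprobability m^k$ holds following Wei--Winnicki''); this generality is used when the estimator is applied to critical and near-critical systems. Your argument cannot reach those cases: it requires a stationary distribution with finite, nonzero second moment and an ergodic theorem for the stationary chain, none of which exist for $m\ge 1$. Your parenthetical remark that the slope statement ``does not need subcriticality per se --- only that $\Var{A_\infty}$ is finite and positive'' is internally inconsistent, since $A_\infty$ is only defined when $m<1$. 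A second, smaller issue is that the ergodicity/positive-recurrence step is asserted rather than established; it is true for subcritical BPs with immigration under the stated moment assumptions, but it is precisely the load-bearing analytic input of your route and would need a citation or proof, whereas the martingale-difference route sidesteps it entirely. So: correct in substance for the subcritical case modulo the ergodicity citation, but it proves a strictly weaker statement than the lemma as used in the paper.
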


\begin{remark}
For $k=1$, these results were already obtained by \cite{Heyde1972,Venkataraman1982a,Wei1990}, and details can be found in these sources. Based on their proofs, we here show the generalization to $k$ timesteps.
\end{remark}

\begin{proof}
Let $k \in \BbbN$, $i \in \lbrace 0, \ldots, k -1 \rbrace$.
Construct a new random process by starting at time $i$ and taking every $k$-th time step of the original process $A_{t}$. This new process is given by $A^{(k, i)}_{t^\prime} = A_{i + k \cdot t^\prime}$ with the index $t^\prime \in \BbbN$.
Hence, the ``time'' $t'$ of this new process relates to the time $t$ of the old process as $t = i + k \cdot t'$.
For a time series of length $T$, let $r^{(k,i)}$  be the least square estimator for the slope and  $\hat{s}^{(k,i)}$ the least square estimator for the intercept of linear regression on all pairs $(A^{(k, i)}_{t^\prime + 1}, \, A^{(k, i)}_{t^\prime})$ from the time series $\lbrace A^{(k, i)}_{t^\prime}\rbrace_{t^\prime=0}^{\lfloor (T - 1)/ k \rfloor}$.
We will derive that $r^{(k,i)}$ is a consistent estimator for $m^k$.
According to \cite{Wei1990}, it is sufficient to show that the evolution of $A^{(k, i)}_{t^\prime}$ can be rewritten as

\begin{equation}
A^{(k, i)}_{t^\prime}  = 
  m^k  \cdot A^{(k, i)}_{t^\prime-1} +  h \frac{1-m^k}{1-m} + \epsilon^{(k, i)}_{t^\prime}
\end{equation}

\noindent
with a martingale difference sequence $\epsilon^{(k, i)}_{t^\prime}$, as this is a stochastic regression equation.
Hence, consider

\begin{equation}
\epsilon^{(k, i)}_{t^\prime} = 
A^{(k, i)}_{t^\prime} - m^k  \cdot A^{(k, i)}_{t^\prime-1} -  h \frac{1-m^k}{1-m}
=
A_{i+k t^\prime} - m^k \cdot A_{i+k \, (t^\prime - 1)} - h \frac{1-m^k}{1-m}.
\end{equation}

\noindent
We now show that $(\epsilon^{(k, i)}_{t^\prime})_{t\prime \in \BbbN}$ is a martingale difference sequence for all $k$.
From iteration of Eq. \eqref{eq:PAR_equation_2}, it is easy to see that

\begin{equation}
\E{A^{(k, i)}_{t^\prime} | A^{(k, i)}_{t^\prime - 1} = j} = \E{A_{k t^\prime + i} | A_{k t^\prime - k + i} = j} = m^k j + h \frac{1-m^k}{1-m}
\label{eq:regression_expectation}
\end{equation}

\noindent
holds.
Hence, $\E{\epsilon^{(k, i)}_{t^\prime} \cond  A^{(k, i)}_{t^\prime-1} =j } =0$
for any $j$ and $\lbrace \epsilon^{(k,i)}_{t^\prime} \rbrace$ is indeed a martingale difference sequence.
Therefore, $\lbrace A^{(k, i)}_{t^\prime}\rbrace_{t^\prime=0}^{\lfloor T / k \rfloor}$ satisfies a linear stochastic regression equation with slope $m^k$ and intercept $h \frac{1-m^k}{1-m}$. 
The least square estimators return unbiased and consistent estimates for the slope and intercept in the subcritical case, i.e. the estimators converge in probability \autocite{Heyde1972,Venkataraman1982a,Wei1990}:

\begin{equation}
\hat{r}^{(k,i)} \convergesinprobability m^k \qquad  \hat{s}^{(k,i)} \convergesinprobability  h \frac{1-m^k}{1-m}. \nonumber
\end{equation}

\noindent
In the critical and supercritical cases, only $\hat{r}^{(k,i)} \convergesinprobability m^k$ holds following \cite{Wei1990}. Hence, we obtain $\hat{r}_k \convergesinprobability m^k$ for all $m$ and $\hat{s}_k \convergesinprobability h (1-m^k)/(1-m)$ if $m < 1$.
\end{proof}
\noindent

\begin{corollary}
\label{theorem:mlr_unbiased}
As least square estimation of $\hat{b}$ and $\hat{m}$ from minimizing the residual \eqref{eq:mr_residual} is consistent, multistep regression is a consistent estimator for $m$ under full sampling, $\hat{m} \convergesinprobability m$.
\end{corollary}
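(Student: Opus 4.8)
The plan is to view the MR estimate $\hat m$ as a continuous function of the finite collection of first-stage slopes $\hat r_1,\dots,\hat r_{k_\mathrm{max}}$ and then invoke the continuous mapping theorem, using Lemma \ref{lemma:mlr_slopes} to control those slopes. Write $\hat{\mathbf r}=(\hat r_1,\dots,\hat r_{k_\mathrm{max}})$ and $\mathbf r^{*}=(m,m^2,\dots,m^{k_\mathrm{max}})$. By Lemma \ref{lemma:mlr_slopes} each coordinate satisfies $\hat r_k\convergesinprobability m^k$, and since only a fixed finite number $k_\mathrm{max}$ of coordinates is involved, coordinatewise convergence in probability upgrades to joint convergence $\hat{\mathbf r}\convergesinprobability\mathbf r^{*}$. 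It then remains to show that the second stage of Definition \ref{def:mr_estimator} defines a map $\mathbf r\mapsto \hat m(\mathbf r)$ that is well defined and continuous in a neighbourhood of $\mathbf r^{*}$ with $\hat m(\mathbf r^{*})=m$; the corollary follows at once.

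For the second stage, introduce the objective $\rho(b,\mu;\mathbf r)=\sum_{k=1}^{k_\mathrm{max}}\bigl(r_k-b\,\mu^k\bigr)^2$, so that the estimator minimises $\rho(\cdot,\cdot;\hat{\mathbf r})$ over the natural parameter region $b>0$, $\mu>0$. Three points need to be checked. First, identifiability at $\mathbf r^{*}$: there $\rho(b,\mu;\mathbf r^{*})=0$ exactly when $b\mu^k=m^k$ for all $k$, and dividing the $k=2$ equation by the $k=1$ equation (this is where $k_\mathrm{max}\ge 2$ enters) gives $\mu=m$ when $m\neq0$, hence $b=1$; so $(1,m)$ is the unique minimiser. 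Second, existence and local well-posedness: $\rho(\cdot,\cdot;\mathbf r)$ is a polynomial that is coercive on the relevant closed region and jointly continuous in $(b,\mu,\mathbf r)$, so a minimiser exists for every $\mathbf r$, and a standard perturbation argument shows the minimiser is unique and close to $(1,m)$ once $\mathbf r$ is close enough to $\mathbf r^{*}$. Third, continuity of the argmin: this is an instance of Berge's maximum theorem (continuous objective, fixed feasible set), or can be argued directly — if $\mathbf r_n\to\mathbf r^{*}$ but $\hat m(\mathbf r_n)$ stayed bounded away from $m$, a compactness/subsequence argument would produce a minimiser of $\rho(\cdot,\cdot;\mathbf r^{*})$ different from $(1,m)$, contradicting identifiability. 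Combining the three points, $\hat m(\cdot)$ is continuous at $\mathbf r^{*}$ with value $m$.

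Finally, the continuous mapping theorem applied to $\hat{\mathbf r}\convergesinprobability\mathbf r^{*}$ and the map $\hat m(\cdot)$, continuous at $\mathbf r^{*}$, yields $\hat m=\hat m(\hat{\mathbf r})\convergesinprobability \hat m(\mathbf r^{*})=m$, which is the assertion; the same argument applied to the auxiliary coordinate gives $\hat b\convergesinprobability 1$.

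I expect the main obstacle to be the well-posedness and continuity of the argmin map for the non-convex objective \eqref{eq:mr_residual}, rather than anything probabilistic: one must exclude the possibility that a small perturbation of the slopes creates a spurious far-away minimiser. Restricting to the natural range ($b>0$, and $0<\mu<1$ for the subcritical processes of interest, or $\mu>0$ in general) together with the coercivity of the polynomial objective makes this tractable. The genuinely degenerate case is $m=0$, where $b$ is not identified and $\hat m$ need not be pinned down by \eqref{eq:mr_residual}; this has to be treated by convention or by noting that it is outside the regime ($m$ near $1$) relevant to the applications, and I would flag it explicitly rather than absorb it into the generic argument.
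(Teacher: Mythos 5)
Your proposal is correct and follows exactly the route the paper intends: the paper states this corollary without any further proof, simply asserting that consistency of the slopes $\hat r_k$ (from Theorem \ref{lemma:mlr_slopes}) plus consistency of the least-squares fit of $b\,m^k$ to those slopes gives $\hat m \convergesinprobability m$. Your write-up supplies the details the paper omits (joint convergence of the finite slope vector, identifiability of $(b,m)$ for $k_\mathrm{max}\ge 2$, continuity of the argmin, continuous mapping theorem), and your explicit flagging of the degenerate $m=0$ case matches the paper's own later acknowledgement in \ref{sec:supp_poisson} that any point on the manifold $b=0$ minimizes \eqref{eq:mr_residual} there.
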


\noindent
These results were obtained for BPs. However, the derivation is here only based on the autoregressive representation \eqref{eq:PAR_equation}, motivation the following proposition:

\begin{proposition}
Multistep regression is a consistent estimator for $m$ for any PAR satisfying Eq. \eqref{eq:PAR_equation}.
\end{proposition}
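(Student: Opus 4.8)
The plan is to reduce the statement to the branching-process case already established in Theorem~\ref{lemma:mlr_slopes} and Corollary~\ref{theorem:mlr_unbiased}, exploiting that those proofs used the branching structure only through the autoregressive relation~\eqref{eq:PAR_equation}. I would proceed in two stages mirroring the two steps of Definition~\ref{def:mr_estimator}: first show that each lag-$k$ regression slope consistently estimates $m^{k}$, i.e. $\hat r_k \convergesinprobability m^{k}$, and then show that the nonlinear fit of the collection $\{\hat r_k\}$ to the exponential model $r_k = b\,m^{k}$ is a continuous operation at the true parameter, so that consistency of the $\hat r_k$ propagates to $\hat m \convergesinprobability m$.

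For the first stage, iterate~\eqref{eq:PAR_equation} to obtain~\eqref{eq:PAR_equation_2}, namely $\E{A_{t+k}\cond A_{t}=j} = m^{k} j + h\frac{1-m^{k}}{1-m}$. Fix $k$ and an offset $i\in\{0,\dots,k-1\}$, let $A^{(k,i)}_{t'} := A_{i+kt'}$, and set
\begin{equation}
\epsilon^{(k,i)}_{t'} := A^{(k,i)}_{t'} - m^{k} A^{(k,i)}_{t'-1} - h\,\frac{1-m^{k}}{1-m}.
\end{equation}
By~\eqref{eq:PAR_equation_2} one has $\E{\epsilon^{(k,i)}_{t'}\cond A^{(k,i)}_{t'-1}} = 0$, so $\{\epsilon^{(k,i)}_{t'}\}$ is a martingale difference sequence and $\{A^{(k,i)}_{t'}\}$ obeys a linear stochastic regression equation with slope $m^{k}$ and intercept $h(1-m^{k})/(1-m)$. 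The standard least-squares consistency theorems for such equations~\cite{Heyde1972,Venkataraman1982a,Wei1990} then give $\hat r^{(k,i)} \convergesinprobability m^{k}$ for every $m$, together with $\hat s^{(k,i)} \convergesinprobability h(1-m^{k})/(1-m)$ when $m<1$; pooling over $i=0,\dots,k-1$ yields $\hat r_k \convergesinprobability m^{k}$. This is verbatim the argument of Theorem~\ref{lemma:mlr_slopes}, and it invokes no property of $A_t$ beyond~\eqref{eq:PAR_equation}.

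For the second stage, observe that~\eqref{eq:mr_residual} is minimised, for fixed $m$, by the profiled value $\hat b(m) = \big(\sum_{k}\hat r_k m^{k}\big)\big/\big(\sum_{k} m^{2k}\big)$, so MR estimation amounts to minimising a profile residual $\tilde R(m)$ that is jointly continuous in $(\hat r_1,\dots,\hat r_{k_\mathrm{max}})$ and $m$. Since $(\hat r_1,\dots,\hat r_{k_\mathrm{max}}) \convergesinprobability (m,m^{2},\dots,m^{k_\mathrm{max}})$, the limiting profile residual vanishes at the truth; and for $m\neq 0$ the truth is its unique minimiser, because $b\,m = m$ and $b\,m^{2} = m^{2}$ force $b = 1$, which is exactly where the requirement $k_\mathrm{max}\ge 2$ in Definition~\ref{def:mr_estimator} enters. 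A routine argmin-continuity argument — $\tilde R$ continuous, the relevant parameter range reducible to a compact set on which the limiting minimiser is unique — together with the continuous mapping theorem then yields $\hat m \convergesinprobability m$, which is Corollary~\ref{theorem:mlr_unbiased} transcribed to the PAR setting. Equivalently, one can phrase the conclusion as $\hat b\,\hat m^{k} \convergesinprobability m^{k}$ for each $k$, which identifies $\hat m$ whenever $m\neq 0$.

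The main obstacle I anticipate is bookkeeping about exactly what ``PAR'' is taken to furnish. The martingale-difference step needs $\E{A_{t+1}\cond\mathcal F_t} = m A_t + h$ for the natural filtration $\mathcal F_t = \sigma(A_0,\dots,A_t)$ — i.e. a genuinely Markov autoregressive representation — not merely the one-step, one-variable identity~\eqref{eq:PAR_equation}; and the cited least-squares consistency results carry their own regularity hypotheses (finite conditional second moments of the $\epsilon^{(k,i)}_{t'}$, the Lai--Wei-type divergence of the design sums, and a non-degeneracy condition playing the role of the BP non-triviality assumption), which the bare statement of the Proposition does not record. The degenerate case $m=0$ is also genuinely special, since then the limiting profile residual is identically zero and $\hat m$ need not converge; it is naturally excluded by the non-triviality requirement. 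The honest formulation is therefore that MR estimation is consistent for every PAR satisfying these standard stochastic-regression conditions — which indeed covers AR(1), (R)INAR(1) and Kesten processes — and a careful write-up should make those standing assumptions explicit rather than fold them into the word ``PAR''.
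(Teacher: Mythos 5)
The first thing to note is that the paper does not prove this statement at all: the \texttt{proposition} environment is defined to print as \textsc{Conjecture}, and the text surrounding it says only that the BP derivation ``is here only based on the autoregressive representation'' and that ``numerical results for AR(1) and Kesten processes support this conjecture'' (Fig.~\ref{fig:supp_ar_processes}). So there is no paper proof to compare against; your proposal is an attempt to upgrade the conjecture to a theorem, and it should be judged on its own terms.

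On those terms it is a reasonable and essentially honest sketch. Your two-stage reduction (slope consistency $\hat r_k \convergesinprobability m^k$, then argmin-continuity of the profiled exponential fit with uniqueness of the minimiser forced by $k_\mathrm{max}\ge 2$) mirrors exactly the structure of Theorem~\ref{lemma:mlr_slopes} and Corollary~\ref{theorem:mlr_unbiased}, and the uniqueness argument ($b\,m=m$ and $b\,m^2=m^2$ imply $b=1$, $m'=m$ for $m\neq 0$) is correct. More importantly, the caveats you raise in your final paragraph are precisely the reasons the authors stopped at a conjecture rather than a theorem: iterating Eq.~\eqref{eq:PAR_equation} to obtain Eq.~\eqref{eq:PAR_equation_2}, and verifying the martingale-difference property, both require the conditional expectation to hold with respect to the natural filtration $\sigma(A_0,\dots,A_t)$ (a Markov-type strengthening of the bare one-variable identity); the Heyde--Seneta/Wei--Winnicki consistency results are proved for branching processes with immigration, and transporting them to a general PAR requires the Lai--Wei-type conditions on the design sums and conditional error moments that you name; and the critical/supercritical regimes, which the Conjecture implicitly covers, rest on the deeper Wei--Winnicki analysis whose PAR analogue is not automatic. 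So what you have is a correct conditional proof --- consistency holds for any PAR satisfying the filtration form of Eq.~\eqref{eq:PAR_equation} plus the standard stochastic-regression regularity conditions --- which is strictly more than the paper establishes, but which does not dispose of the conjecture in the generality in which it is stated, since the word ``PAR'' in the statement carries none of those hypotheses. If you write this up, state the standing assumptions explicitly as you propose, and flag that the degenerate case $m=0$ must be excluded (as the paper itself does operationally via the test $H_{\bar r\le 0}$ in \ref{sec:supp_poisson}).
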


\noindent
Numerical results for AR(1) and Kesten processes support this conjecture \autocite{Statman2014} (Fig. \ref{fig:supp_ar_processes}).


\noindent
Next, we show that MR estimation is consistent in the subcritical case even if only the subsampled $a_{t}$ is known:

\begin{theorem}
\label{theorem:mlr_under_subsampling}
Let $A_{t}$ be a PAR with $m < 1$ and a stationary limiting distribution $A_\infty$ and let the PAR be started in the stationary distribution, i.e. $A_0 \sim A_\infty$. 
Let $a_{t}$ be a subsampling of $A_{t}$. Multistep regression (MR) on the subsampled $a_{t}$ is a consistent estimator of the mean offspring $m$.
\end{theorem}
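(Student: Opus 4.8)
The plan is to reduce the statement to two ingredients: (a) for every lag $k$, the least-squares slope $\hat r_k$ from step~1 of Definition~\ref{def:mr_estimator} converges in probability to $b\,m^k$, where the prefactor $b=\alpha^2\Var{A_\infty}\,/\,\Var{a_0}$ is \emph{the same} for all $k$; and (b) the exponential fit in step~2 extracts $m$ from such a collection of slopes. Ingredient~(b) is essentially Corollary~\ref{theorem:mlr_unbiased}: if the limiting slopes lie exactly on a curve $k\mapsto \tilde b\,\tilde m^k$ with $\tilde b=b$ and $\tilde m=m$, the residual \eqref{eq:mr_residual} vanishes there, and for $k_\mathrm{max}\geq 2$ this zero is the unique minimiser (e.g. $\tilde m=r_2/r_1$, $\tilde b=r_1^2/r_2$), with the objective smooth and its Hessian at the minimiser nondegenerate; hence the minimiser depends continuously on $(\hat r_1,\dots,\hat r_{k_\mathrm{max}})$ near that point and $\hat m\convergesinprobability m$ by the continuous mapping theorem. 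For a non-trivial subcritical BP both $m\in(0,1)$ (non-triviality forces $m>0$) and $b\in(0,1]$ ($\Var{a_0}\geq\alpha^2\Var{A_\infty}>0$, so $b\neq 0$), so these identities are legitimate. All the work is therefore in~(a).

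First I would establish that the empirical moments of the subsampled series converge. Since the BP is started in its stationary distribution it is strictly stationary, and, being a positive recurrent Markov chain, ergodic; because the subsample is obtained by a time-homogeneous randomisation that is conditionally independent across time (Definition~\ref{def:subsampling}(i)) and does not feed back into the dynamics (condition~(ii)), the joint chain $(A_t,a_t)$ is again stationary and ergodic, hence so is its factor $(a_t)$, with finite second moments inherited from $\Var{A_\infty}<\infty$ (Lemma~\ref{theorem:exp_and_var}) together with finiteness of the conditional sampling variance. Birkhoff's ergodic theorem then yields
\[
\frac{1}{T}\sum_t a_t \to \E{a_0}, \qquad \frac{1}{T}\sum_t a_t^2 \to \E{a_0^2}, \qquad \frac{1}{T}\sum_t a_t a_{t+k} \to \E{a_0 a_k},
\]
so that $\hat r_k$, being the ratio of the sample lag-$k$ covariance to the sample variance (the $T-k$ versus $T$ normalisation being immaterial in the limit), satisfies $\hat r_k\convergesinprobability \mathrm{Cov}(a_0,a_k)\,/\,\Var{a_0}$.

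Next I would evaluate that limit. Conditioning on the full trajectory $\mathcal A=\sigma(A_s:s\in\BbbN)$ and using Definition~\ref{def:subsampling}(i) (the $a_t$ are conditionally independent given $\mathcal A$, each depending on $A_t$ only) together with~(iii) ($\E{a_t\cond \mathcal A}=\alpha A_t+\beta$), one obtains $\E{a_0 a_k}=\alpha^2\E{A_0 A_k}+\alpha\beta(\E{A_0}+\E{A_k})+\beta^2$ and, in the same way, $\E{a_0}\,\E{a_k}=\alpha^2\E{A_0}\,\E{A_k}+\alpha\beta(\E{A_0}+\E{A_k})+\beta^2$; subtracting, the $\beta$-terms cancel and $\mathrm{Cov}(a_0,a_k)=\alpha^2\,\mathrm{Cov}(A_0,A_k)$. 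For the full process, iterating \eqref{eq:PAR_equation} as in \eqref{eq:PAR_equation_2} gives $\E{A_k\cond A_0}=m^k A_0+h\frac{1-m^k}{1-m}$, so $\mathrm{Cov}(A_0,A_k)=\mathrm{Cov}(A_0,\E{A_k\cond A_0})=m^k\Var{A_\infty}$. Combining, $\hat r_k\convergesinprobability \alpha^2\Var{A_\infty}\,/\,\Var{a_0}\cdot m^k=b\,m^k$ with $b$ independent of $k$, which is ingredient~(a), and the proof is complete.

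The hard part, I expect, is not the algebra above — which is just the $\beta$-cancellation plus the textbook lag-$k$ autocovariance of an AR(1)-type recursion — but the ergodicity / law-of-large-numbers step for the subsampled series. Unlike the fully sampled case of Lemma~\ref{lemma:mlr_slopes}, the regression of $a_{t+k}$ on $a_t$ does \emph{not} reduce to a linear stochastic regression equation with a martingale-difference error, since $\E{A_t\cond a_t}$ is in general a nonlinear function of $a_t$; so the Wei/Heyde least-squares results cannot be quoted directly, and one must instead go through the ergodic theorem for the second moments of $(a_t)$. This is precisely where conditions~(i)--(ii) of Definition~\ref{def:subsampling} are essential, and where finite-variance hypotheses on the offspring, immigration and sampling laws enter. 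For a general PAR the same scheme works once one assumes the stationary process is ergodic, which is why the theorem is stated for a process started in its stationary limiting distribution.
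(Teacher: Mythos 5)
Your proposal is correct and takes essentially the same route as the paper: the heart of both arguments is the identity $\mathrm{Cov}[a_{t},a_{t+k}]=\alpha^{2}\,m^{k}\,\Var{A_\infty}$, obtained by conditioning on the full trajectory so that Definition \ref{def:subsampling}(i) factorizes the inner expectation and (iii) replaces $a_t$ by $\alpha A_t+\beta$, combined with $\E{A_{t+k}\cond A_t}=m^kA_t+h\frac{1-m^k}{1-m}$, after which the $k$-independent bias $b=\alpha^2\Var{A_\infty}/\Var{a_t}$ drops out of the exponential fit. You additionally make explicit three steps the paper leaves implicit — the ergodic-theorem justification that the sample moments of $(a_t)$ converge (the paper simply asserts $\hat r_k\convergesinprobability\rho_{a_ta_{t+k}}$), the cancellation of the $\beta$-terms (the paper sets $\beta=0$ ``without loss of generality''), and the uniqueness/continuity argument behind the exponential fit — and your remark that the martingale-difference least-squares machinery of Lemma \ref{lemma:mlr_slopes} does not transfer to the subsampled series is precisely why the paper switches to the correlation-coefficient formulation at this point.
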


\begin{proof}
The existence of a stationary distribution $A_\infty$ was shown by \cite{Heathcote1965}. 
The least square estimator for the slope of linear regression is also given by\autocite{Kenney1962}

\begin{equation}
\hat{r}_k = \hat{\rho}_{a_{t}\,a_{t+k}} \, \frac{\hat{\sigma}_{a_{t}}}{ \hat{\sigma}_{a_{t+k}}}
\end{equation}

\noindent
with the the estimated standard deviations $\hat{\sigma}_{a_{t}}$ and $\hat{\sigma}_{a_{t+k}}$ of $a_{t}$ and $a_{t+k}$ respectively.
In the subcritical state, $\sigma_{a_{t}} = \sigma_{a_{t+k}}$ because of stationarity. Thus estimating the linear regression slope is equivalent to estimating the Pearson correlation coefficient $\hat{\rho}_{a_{t} \, a_{t+k}} = \hat{\rho}_{a_{t}}(k)$ (which is identical to the autocorrelation function of $a_{t}$).
In the following, we calculate the Pearson correlation coefficient for the subsampled time series by evaluating $\E{a_{t} \, a_{t+k}}$. We use the law of total expectation in order to express $\E{a_{t} \, a_{t+k}}$ not in dependence of $a_{t}$, but in terms of $A_{t}$:

\begin{align}
\E{a_{t} \, a_{t+k}} = \: & \E[A_{t+k}, A_{t}]{\E{a_{t}\,a_{t+k} \cond A_{t}, A_{t+k} }},
\end{align}

\noindent
where the inner expectation value is taken with respect to the joint distribution of $a_{t+k}$ and $a_{t}$, and the outer with respect to the joint distribution of $A_{t+k}$ and $A_{t}$.
Through conditioning on both $A_{t}$ and $A_{t+k}$, $(a_{t} \cond A_{t})$ and $(a_{t+k} \cond A_{t+k})$ become independent due to Def. \ref{def:subsampling}. Hence, the joint distribution of $(a_{t}, a_{t+k} \cond A_{t}, A_{t+k})$ factorizes, and the expectation value factorizes as well. 
By definition, $\E{a_{t} \cond A_{t} = j} = \alpha \, j + \beta$ and hence

\begin{align}
\E{a_{t} \, a_{t+k}} =  \E[A_{t+k}, A_{t}]{(\alpha A_{t + k} + \beta)\,(\alpha A_{t} + \beta)}
\end{align}

\noindent

 Without loss of generality, we here show the proof for $\beta=0$ which is easily extended to the general case. We express $\E{a_{t} \,a_{t+k}}$ in terms of Eq. \eqref{eq:PAR_equation_2} using the law of total expectation again: 

\begin{align}
\E{a_{t} \, a_{t+k}} = \: & \alpha^2 \E{A_{t} \, A_{t+k}} \nonumber \\
= \: & \alpha^2 \E[A_{t}]{ \E{A_{t} \, A_{t+k} \cond A_{t} }} \nonumber \\
= \: & \alpha^2 \E[A_{t}]{A_{t} \left( m^k \, A_{t} + h \frac{1 - m^k}{1 - m} \right)} \nonumber \\
 = \: & \alpha^2 \, \left( m^k \, \E{A_{t}^2} + (1-m^k) \, \E{A_{t}}^2 \right), \nonumber
\end{align}

\noindent
where the first expectation was taken with respect to the joint distribution of $A_{t}$ and $A_{t+k}$.
We then used that $\E{A_{t}^2}$ and $\E{A_{t}} = h / (1 - m)$ exist, which follows from stationarity of the process. By a similar argument, 

\begin{align}
\E{a_{t + 1}}  = \E{a_{t}} =  \E[A_{t}]{\E{a_{t} \cond A_{t}}} = \alpha \E{A_{t}} = \alpha \frac{h}{1 - m }
\end{align}

\noindent
and combining these results the covariance is 

\begin{align}
\mathrm{Cov}[a_{t+k}, a_{t}] = \: & \E{a_{t+k} \, a_{t}} - \E{a_{t+k}} \E{a_{t}} =  \alpha^2 \, \left( m^k \, \E{A_{t}^2} + (1-m^k) \, \E{A_{t}}^2 \right) - \alpha^2 \E{A_{t}}^2 = \alpha^2 m^k \Var{A_{t}}. 
\end{align}

\noindent
Therefore, we find that the estimator $\hat{r}_k$ converges in probability: 

\begin{equation}
\hat{r}_k \convergesinprobability \rho_{a_{t}a_{t+k}} = \frac{\mathrm{Cov}[a_{t+k}, a_{t}]}{\Var{a_{t}}} = \alpha^2 \, \frac{\Var{A_{t}}}{\Var{a_{t}}} \, m^k.
\label{eq:slopes_final}
\end{equation}

\noindent
Hence, the bias of of the conventional estimator $\mh_\mathrm{C} = \hat{r}_1$ is precisely given by the factor $b = \alpha^2 \Var{A_{t}} \, / \, \Var{a_{t}}$. However, importantly the relation $\hat{r}_k = \hat{b} \, \hat{m}^k$ still holds for the subsampled $a_{t}$.  Given a collection of multiple linear regressions $\hat{r}_1, \ldots, \hat{r}_{k_{\max}}$, the least square estimation of $\hat{b}$ and $\hat{m}$ from minimizing the residual \eqref{eq:mr_residual} yields a consistent estimator $\hat{m}$ for the mean offspring $m$ even under subsampling and only requires the knowledge of $a_{t}$.
\end{proof}

\noindent
This proof also showed that the conventional estimator \autocite{Heyde1972} is biased under subsampling:
\begin{corollary}
\label{theorem:regression_biased}
Let $\lbrace a_{t} \rbrace$ be a subsampling of a subcritical PAR $\lbrace A_{t} \rbrace$. Then the conventional linear regression estimator $\hat{m}_\mathrm{C} = \hat{r}_1$ by \cite{Heyde1972} is biased by $ m (\alpha^2 \frac{\Var{A_{t}}}{\Var{a_{t}}} - 1)$. Equivalently, it is biased by the factor $\alpha^2 \frac{\Var{A_{t}}}{\Var{a_{t}}}$.
\end{corollary}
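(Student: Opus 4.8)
The plan is to read off the claim directly from the chain of identities established in the proof of Theorem \ref{theorem:mlr_under_subsampling}, specialised to $k=1$. First I would recall that the conventional estimator is by definition $\hat m_\mathrm{C} = \hat r_1$, the least-square slope of the linear regression of $a_{t+1}$ on $a_{t}$. For a fully sampled subcritical PAR, Lemma \ref{lemma:mlr_slopes} (case $k=1$) gives $\hat r_1 \convergesinprobability m$, so the ``target'' value against which any bias is measured is $m$ itself. The task is therefore to identify the probability limit of $\hat r_1$ when the regression is run on the subsampled sequence $\{a_{t}\}$ instead, and to subtract $m$.

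Next I would invoke Eq. \eqref{eq:slopes_final} from the proof of Theorem \ref{theorem:mlr_under_subsampling}, which — under the standing assumptions that $\{A_{t}\}$ is a subcritical PAR started in its stationary distribution and $\{a_{t}\}$ is a subsampling in the sense of Definition \ref{def:subsampling} — establishes
\begin{equation*}
\hat r_k \convergesinprobability \alpha^2\,\frac{\Var{A_{t}}}{\Var{a_{t}}}\,m^k .
\end{equation*}
Setting $k=1$ yields $\hat r_1 \convergesinprobability \alpha^2\,\frac{\Var{A_{t}}}{\Var{a_{t}}}\,m = b\,m$, where $b = \alpha^2\Var{A_{t}}/\Var{a_{t}}$ is the constant factor isolated in that proof. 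The additive bias is then simply the difference between this limit and the true value, $b\,m - m = m\bigl(\alpha^2\frac{\Var{A_{t}}}{\Var{a_{t}}} - 1\bigr)$, and the multiplicative statement is the observation that the limit equals $b$ times the true value $m$, i.e. the estimator is biased by the factor $b = \alpha^2\Var{A_{t}}/\Var{a_{t}}$. Stationarity is what guarantees that $\Var{a_{t}}$ and $\Var{a_{t+1}}$ coincide (so that the slope equals the lag-one autocorrelation) and that $\Var{A_{t}}$, $\E{A_{t}}$ are finite and time-independent; these facts were already used in deriving \eqref{eq:slopes_final}, so nothing new is needed here.

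Since essentially all the analytic work — the factorisation of the joint law of $(a_{t},a_{t+k})$ through conditioning on $(A_{t},A_{t+k})$, the computation of $\mathrm{Cov}[a_{t+1},a_{t}] = \alpha^2 m\,\Var{A_{t}}$, and the convergence-in-probability statement — is inherited verbatim from the preceding theorem, there is no genuine obstacle: this corollary is a one-line specialisation. The only point requiring a word of care is purely terminological, namely making explicit that ``biased by $X$'' is meant in the asymptotic sense (the probability limit of $\hat m_\mathrm{C}$ is $m + X$, equivalently $bm$), rather than as a finite-sample bias; I would state this once and then conclude.
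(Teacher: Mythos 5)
Your proposal is correct and matches the paper's own treatment: the paper derives this corollary as an immediate consequence of Eq. \eqref{eq:slopes_final} in the proof of Theorem \ref{theorem:mlr_under_subsampling}, specialised to $k=1$, exactly as you do. Your additional remarks on the role of stationarity and on the asymptotic (probability-limit) sense of ``bias'' are consistent with, and slightly more explicit than, the paper's one-line justification.
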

\noindent

\paragraph{Nonstationarity, criticality and supercriticality.}
 
The consistency of the estimator in the fully sampled case is included in our proof of Lemma \ref{lemma:mlr_slopes} and follows from the results by \cite{Heyde1972, Wei1990}.
Our proof for the subsampled case (Theorem \ref{theorem:mlr_under_subsampling}), in contrast, strictly requires  stationarity ($A_{t} \sim A_\infty$ for any $t$) and the existence of the first two moments of $A_{t}$.
We expect that the MR estimator is also consistent if the subcritical process is not started in the stationary distribution, $A_0 \nsim A_\infty$, because the results by \cite{Heathcote1965} show that it will converge to this stationary distribution as $t \rightarrow \infty$ (Fig. \ref{fig:supp_consistency_transients}).
Furthermore, numerical results suggest that the MR estimator is also consistent for critical and supercritical cases, where no stationary distribution exists (Fig. \ref{fig:animals}\textbf{d}).

\subsection{Identifying common non-stationarities and Poisson activity.}
\label{sec:supp_poisson}

In many types of analyses, non-stationarities in the time series can lead to wrong results, typically an overestimation of $\mh$. We developed tests to exclude data sets with signatures of common non-stationarities. The different non-stationarities, their impact on the $r_k$ and the rules for rejection of time series are outlined below.

First, \textit{transient} increases of the drive $h_t$, e.g. in response to a stimulus, lead to a transient increase in $\E{A_{t}}$.
These transients induce correlations or anti-correlations, which prevail on long time scales (Fig. \ref{fig:supp_monkey_nonstationaritites}\textbf{c},\textbf{d}).
The autocorrelation function is therefore better captured by an exponential with offset, $r_k = b_\mathrm{offset} \cdot m_\mathrm{offset}^k + c_\mathrm{offset}$.
If the residual of this exponential with offset $R^2_{\mathrm{offset}}$ was smaller than the residual of the MR model $ R^2_{\mathrm{exp}}$ by a factor of two, $H_{\mathrm{offset}} = (2 \cdot  R^2_{\mathrm{offset}} < R^2_{\mathrm{exp}})$, then the data set was rejected. The factor two punishes for the differences in degree of freedom: The residuals of a model with two free parameters (exponential with offset) instead of one (exponential only) can only be smaller. 

Second, ramping of the drive can lead to overestimation of $m$ (Fig. \ref{fig:supp_monkey_nonstationaritites}\textbf{e}).
The comparison of the two models with and without offset introduced above serves as a consistency check able to identify ramping: if the data are captured by a BP, both models should infer identical $\mh$.
Thus, a difference between $\mh_\mathrm{exp}$ and $\mh_\mathrm{offset}$ hints at the invalidity of MR estimation.
Instead of $\mh$, we compared the autocorrelation times $\hat{\tau}_\mathrm{offset} = - \Delta t / \log \hat{m}_\mathrm{offset}$ and $\hat{\tau}_\mathrm{exp}$ obtained from both models, as the logarithmic scaling increases the sensitivity.
If their relative difference was too large, then the data are inconsistent with a BP and MR estimation is invalid: $H_\tau = ( | \tau_{\mathrm{exp}} - \tau_{\mathrm{offset}}| \, / \, \min \lbrace \tau_{\mathrm{exp}}, \tau_{\mathrm{offset}} \rbrace > 2 )$.

Third, when a system changes between different states of activity, e.g. up and down states, the drive rate $\E{h_t}$ may experience sudden jumps.
These can lead to spurious autocorrelation (Fig. \ref{fig:supp_monkey_nonstationaritites}\textbf{f}).
To identify these trends resulting from non-stationary input $h_t$ or from choosing too short data sets, we tested whether the sequence of $r_k$ was fit better by a linear regression  $r_k = q_1 k + q_2$  on the pairs $(k, r_k)$, than by the exponential relation \eqref{eq:mr_residual}. If the residuals $R^2_{\mathrm{lin}}$ were smaller than $ R^2_{\mathrm{exp}}$:  $H_{\mathrm{lin}} = ( R^2_{\mathrm{lin}} < R^2_{\mathrm{exp}})$, data were rejected. 

Apart from non-stationarities, even Poisson activity ($m=0$, $A_{t} = h_t$) with stationary rate may lead to a spurious overestimation of $\mh$ as well:
for \textit{subsampled} branching processes of \textit{finite} duration, the Poisson case and processes close to criticality ($m=1$) can show very similar autocorrelation results, because the sequence of $r_k$ is expected to be absolutely or almost flat, respectively.
Moreover, for $m=0$ any solution on the manifold with $b=0$ minimizes the residuals in Eq. \eqref{eq:mr_residual}. 
Hence, the estimator for $\hat{m}$ may yield any value depending on the initial conditions of the minimization scheme.
To distinguish between $m=0$ and $m>0$, we used the fact that for $m=0$, all slopes $r_k$ are expected to be distributed around zero, $\E{r_k} = 0$.
In contrast, for processes with $m > 0$, all slopes are expected to be larger than zero $\E{r_k} = b \cdot m^k > 0$.
Thus to identify stationary Poisson activity, we tested (using a one-sided t-test) if the  slopes obtained from the data were significantly larger than zero, yielding the $p$-value $p_{\bar{r} \leq 0}$ and the following test (Fig. \ref{fig:supp_monkey_nonstationaritites}\textbf{b}): $H_{\bar{r} \leq 0 } = (p_{\bar{r} \leq 0} \geq 0.1)$.
The choice of the significance level should be guided by the severity of type I or II errors here: if it is set too liberal, Poisson activity may be mistaken for correlated activity, potentially even close-to-critical.
On the other hand, if the significance level is too conservative, activity with long autocorrelation times may be spuriously considered Poissonian under strong subsampling (when $b$ is small and all slopes only slightly differ from zero).
For this study, we chose a significance level of $p_{\bar{r} \leq 0} < 0.1$ in order to not underestimate the risk of large activity cascades.
To confirm candidates for Poisson activity identified through positive $H_{\bar{r} \leq 0 }$, we assured that the $r_k$ did not show a systematic trend, i.e. that linear regression of $r_k$ as a function of $k$ (see $H_{\mathrm{lin}}$ above) yielded slope zero: $H_{q_1 = 0} = (p_{q_1 = 0} \geq 0.05)$.
The according significance level for this two sided test is then given by $p_{q_1 \neq 0} < 0.05$.

We discriminate the following cases in the order indicated in Tab. \ref{tab:supp_consistency}:
 $\mh$ obtained from MR estimation is only valid if none of the tests (except $H_{q_1 = 0}$, which is ignored here) is positive.
A positive result for any of  $H_{\mathrm{offset}}$, $H_\tau$, or $H_{\mathrm{lin}}$ indicates non-stationarities, the data are not explained by a stationary BP, and MR estimation is invalid.
If $H_{\bar{r} \leq 0}$ is positive, the data are potentially consistent with Poisson activity ($m =0 $).
This is only the case if $H_{q_1 = 0}$ is also positive. If otherwise $H_{q_1 = 0}$ is negative, the Poisson hypothesis is also rejected and MR estimation invalid.
This strategy correctly identified the validity of MR estimation for all investigated cases: stationary BPs with $m=0.98$ and $m=0.0$ were accepted, while nonstationary BPs with transient changes, ramping, or sudden jumps of the drive were excluded (Fig. \ref{fig:supp_monkey_nonstationaritites}).

\begin{table}
\centering
\begin{tabular}{ccccccc}
 $H_{\mathrm{offset}}$ & $H_{\tau}$ & $H_{\mathrm{lin}}$ &  $H_{\bar{r} \leq 0 }$ & ($H_{q_1 = 0}$) & interpretation & \\ \cline{1-7} 
 $\times$ & $\times$ & $\times$ & $\times$ & -- &  BP with $m = \hat{m}$ explains data & MR estimation valid\\ \cline{6-7}
 $\checkmark$ & -- & -- & -- &  -- & \multirow{4}{*}{data not explained by BP} & \multirow{4}{*}{MR estimation invalid} \\
 -- & $\checkmark$ & -- & -- & -- &  \\
 -- & -- & $\checkmark$ & -- & -- &  \\
 -- & -- & -- & $\checkmark$ & $\times$ &  \\ \cline{6-7}
 -- & -- & -- & $\checkmark$ & $\checkmark$ & Poisson activity ($m = 0$) explains data & MR estimation valid\\ 

\end{tabular}
\caption{\textbf{Consistency checks for MR estimation.}
In order to assess if the results obtained from MR estimation are consistent with a BP with stationary parameters, we perform five tests (\ref{sec:supp_poisson}). 
We discriminate the following cases in this order:
A BP with $m = \hat{m}$ is only considered to explain the data, if the four tests $H_{\mathrm{offset}}$, $H_\tau$, $H_{\mathrm{lin}}$, and $H_{\bar{r} \leq 0 }$ are negative ($\times$).
If any of $H_{\mathrm{offset}}$, $H_\tau$, or $H_{\mathrm{lin}}$ is positive ($\checkmark$), the data cannot be explained by a BP with any $m$, regardless of the other tests (--), and MR estimation is invalid.
If $H_{\bar{r} \leq 0 }$ is positive, the additional test $H_{q_1 = 0}$ becomes relevant: if it is negative, the data cannot be explained by a BP with any $m$.
If it is also positive, the data are consistent with Poisson activity (BP with $m=0$). 
}
\label{tab:supp_consistency}
\end{table}

\subsection{Variance of the estimates.}
\label{sec:supp_variance}

The distribution of $\mh$ is consistent with a normal distribution $\mathcal{N}(m, \sigma_{\hat{m}}^2)$ centered around the true mean offspring $m$ (Fig. \ref{fig:supp_variance_estimation}\textbf{a}; numerical results).
The variance $\sigma^2_{\mh}$ depends on the branching ratio $m$, the mean activity $\E{A_{t}}$, the length $L$ of the time series, and the sampling fraction $\alpha$.
Each of these factors affects $\sigma^2_{\mh}$ mainly by changing the \emph{effective length} of the time series, i.e. the number of non-zero entries $l = | \lbrace A_{t} \cond A_{t} > 0 \rbrace |$.
Thus, regardless of the actual time series length $L$ or the mean activity $\E{A_{t}}$, the variance scales as a power-law in $l$, $\Var{\hat{m}} \propto l^{-\gamma}$ (Fig. \ref{fig:supp_variance_estimation}\textbf{b}).
The exponent of this power-law depends on $m$. The closer to criticality the process is, the larger the exponent $\gamma$, i.e. the larger the benefit from longer time series length $l$. For $m=0.99$, we found $\gamma \approx 3/2$.
The performance of the estimator is in principle independent of the mean activity: Small $\E{A_{t}}$ only affect the variance of the MR estimator through a potential decrease of $l$.

Similarly, the degree of subsampling only affects the variance of the estimator through a decrease of the effective length of $a_{t}$. While there may be a significant rise in $\sigma^2_{\mh}$ when reducing the sampling fraction $\alpha$, this  increase can be explained by the coincidental decrease in $l$, as the rescaled variance $\sigma^2_{\mh} \cdot l^{\gamma}$ remains within one order of magnitude over four decades of the sampling fraction $\alpha$ (Fig. \ref{fig:supp_variance_estimation}\textbf{c}).

How does the variance change close to the critical transition?
We found that the answer to this question highly depends on the specific choice of the parameters: if $m$ is varied, one can either keep $\E{A_{t}}$ or $h$ constant, not both at the same time.
If the mean activity $\E{A_{t}}$ is fixed by choosing $h = \E{A_{t}} \, (1 - m)$, then the variance of the process scales as $\Var{A_{t}} \propto 1 / (1-m)$ (Theorem \ref{theorem:exp_and_var}).
As $m \rightarrow 1$, the activity will inevitably get into a regime, where bursts of activity ($A_{t} > 0$) are disrupted by intermittent quiescent periods ($A_{t}$), thereby reducing $l$.
In turn, the variance of the estimator increases as detailed before. 

If however, the drive $h$ is kept constant, we found that the variance scales linearly in the distance to criticality $\epsilon = 1 - m$ over at least 5 orders of magnitude of $\epsilon$: $\sigma_{\hat{m}}^2 \propto \epsilon$ (Fig. \ref{fig:supp_variance_estimation}\textbf{d}).
Thus, the variance decreases when approaching criticality, while the relative variance $\sigma_{\hat{m}}^2 / \epsilon$ is constant.
Note, however, that even though the standard deviation also decreases when approaching criticality ($\sigma_{\hat{m}} \propto \sqrt{\epsilon}$), the relative standard deviation increases ($\sigma_{\hat{m}} / \epsilon \propto 1 / \sqrt{\epsilon}$).

For other measures of variation (e.g. quadratic (like the mean squared error MSE) and linear (like the inter-quartile range IQR)), we obtained scaling laws with the same exponents.

\paragraph{Confidence interval estimation.}
We used a model based approach to estimate confidence intervals for both simulation and experimental data (for Figs. \ref{fig:branching_cartoon}\textbf{c},\textbf{d}, \ref{fig:diseases}\textbf{c},\textbf{d}, and \ref{fig:animals}\textbf{d}), because classical bootstrapping methods underestimate the estimator variance by treating all slopes $r_k$ independently, while they are in fact dependent.
We found that our model based approach constructs more conservative and representative confidence intervals. 

For simulations, we simulated $B \in \BbbN$ independent copies of the investigated model and applied MR estimation to each copy, yielding a collection of $B$ independent estimates $\lbrace \mh^{(b)} \rbrace_{b=1}^B$.

For experimental time series $a_{t}$ with length $L$, mean activity $\E{a_{t}}$, and number of sampled units $n$, MR estimation yields an estimate $\mh$.
We then simulated $B$ copies of branching networks $\lbrace A^{(b)}_t \rbrace_{b=1}^B$ (for simulation details see \ref{sec:supp_models}) with $N = 10,000$ units, $m = \hat{m}$ as inferred by MR estimation, and length $L$ and rate $\E{a_{t}}$ to match the data. 
The rate was matched by setting the drive to $h = \E{a_{t}} \, (1 - \mh) \, N / n $. Thereby, after subsampling $n$ units, the mean activity of each resulting time series $a^{(b)}_t$ matched that of the original time series $a_{t}$, $\E{a^{(b)}_t} = \E{a_{t}}$.
This procedure gives $B$ copies of a BN that all match $a_{t}$ in terms of the mean activity, the branching ratio, time series length, and number of sampled units.
Applying MR estimation to these BNs yields a collection of $B$ independent estimates $\lbrace \mh^{(b)} \rbrace_{b=1}^B$. 
For both simulation and experimental data, the distribution of $\mh$ and confidence intervals can be constructed from this collection.

\subsection{Expectation maximization based on Kalman filtering}
\label{sec:supp_kalman}

Kalman filtering is a method to predict the original time series $A_t$ given a measurement $a_t$, defined for AR(1) processes and affine measurement transformation

\begin{align}
A_{t+1} &\, = m \cdot A_t + h_t\nonumber \\
a_t &\, = \alpha \cdot A_t + \beta_t
\end{align}

\noindent
where $h_t$ and $\beta_t$ are independent Gaussian random variables $h_t \sim \mathcal{N}(h, \xi^2)$ and $\beta_t \sim \mathcal{N}(\beta, \zeta^2)$ and $m$ and $\alpha$ constant real numbers.
Assuming that $A_0 \sim \mathcal{N}(A, \psi)$, Kalman filtering infers the original time series $A_t \cond a_t, \mathcal{M}$ given a measured time series $a_t$ and the known model $\mathcal{M} = (m, h, \xi^2, \alpha, \beta, \zeta^2, A, \psi)$.
Based on an iterative expectation maximization algorithm which incorporates Kalman filtering \autocite{Hamilton1994,Shumway1982,Ghahramani1996}, the model parameters $\mathcal{M}$ can be estimated from a time series $a_t$.
We used this algorithm to infer $m$.
Because of the mutual dependence of the model parameters, we also needed to infer $h$, $\xi^2$, $\alpha$, $\beta$, and $\zeta^2$.
In order to reduce the dimensionality of the maximization step, we disregarded $A$ and $\psi$, as the influence of the initial value decreases if the time series gets long.
For initial values, we chose $m=0.5$ in the center of the range of interest for $m$, $h_t = \E{a_t} \cdot (1 - m)$ (see \ref{sec:supp_bps}), $\xi = 0.1 \cdot h_t$, $\alpha = 1$, $\beta = 0$, and $\zeta = 0.1$.
We further chose $A=\E{a_t}$ and $\psi^2 = \Var{a_t}$ for the two model parameters that were not optimized.

We considered two termination criteria for the EM algorithm: 
First, it is recommended to restrict the EM algorithm to 10 -- 20 cycles in order to avoid overfitting, a common problem with likelihood-based fitting methods for multidimensional model parameters. Therefor we considered $\mh$ inferred after 20 EM cycles.
Second, we considered $\mh$ after the results of two subsequent EM cycles did not differ by more than 0.01\%.

We used the publicly available Python implementation of the Kalman EM algorithm, \textit{pykalman}.
All parameters were chosen as detailed above.
The analysis was performed on a computer cluster, and reached runtimes of several days up to projected runtimes of weeks.
In fact, this computational demand was a limiting factor in terms of widespread application.
In contrast, MR estimation terminated within half a second on the same CPUs.

\subsection{Simulations}
\label{sec:supp_models}

\paragraph{Branching process.}

We simulated BPs according to Eq. \eqref{eq:supp_branching_process} in the following way:
Realizations of the random numbers $y_{t,i}$ and $h_t$ describing the number of offsprings, and the drive, were each drawn from a Poisson distribution: $y_{t,i} \sim \mathrm{Poi}(m)$ with mean $m$, and $h_t \sim \mathrm{Poi}(h)$ with mean $h$, respectively.
Here, we used Poisson distributions as they allow for non-trivial offspring distributions with easy control of the branching ratio $m$ by only one parameter.
For the brain, one might assume that each neuron is connected to $k$ postsynaptic neurons, each of which is excited with probability $p$, motivating a binomial offspring distribution with mean $m = k \, p$.
As in cortex $k$ is typically large and $p$ is typically small, the Poisson limit is a reasonable approximation.
For the performance of the MR estimator and the limit behavior of the BP, the particular form of the law $Y$ is not important such that the special choice we made here does not restrict the generality of our results.

The mean rate $\E{A_{t}}$ depends on $m$ and $h$ (Lemma \ref{theorem:exp_and_var}). 
In the simulation we varied $m$ and fixed $\E{A_{t}} = 100$ by adjusting $h$ accordingly if not stated otherwise.
For subsampling the BP, each unit is observed independently with probability $p \leq 1$ .
Then $a_{t}$ is distributed following a binomial distribution $\mathrm{Bin}(A_{t}, p)$, and subsampling is implemented by drawing $a_{t}$ from $A_{t}$ at each time step.
As $\E{a_{t}} = p \, A_{t}$, this implementation of subsampling satisfies the definition of stochastic subsampling with $\alpha = p$, $\beta = 0$.

\paragraph{Branching network.}
In addition to the classical branching process, we also simulated a branching network model (BN) by mapping a branching process \autocite{Harris1963,Haldeman2005} onto a fully connected network of $N=10,000$ neurons. 
An active neuron activated each of its $k$ postsynaptic neurons with probability $p = m / k$.
Here, the activated postsynaptic neurons were drawn randomly without replacement at each step, thereby avoiding that two different active neurons would both activate the same target neuron.
Similar to the BP, the BN is critical for $m = 1$ in the infinite size limit, and subcritical (supercritical) for $m < 1$ ($m > 1$).
As detailed for the BP, $h$ was adjusted to the choice of $m$ to achieve $\E{A_{t}} = 100$, which corresponds to a rate of 0.01 spikes per neuron and time step.
Subsampling \autocite{Priesemann2009} was applied to the model by sampling the activity of $n$ neurons only, which were selected randomly before the simulation, and neglecting the activity of all other neurons.

\paragraph{Self-organized critical model.}

The SOC neural network model we used here is the Bak-Tang-Wiesenfeld (BTW) model \autocite{Bak1987}.
Translated to a neuroscience context, the model consisted of $N=10,000$ ($100 \times 100$) non-leaky integrate and fire neurons.
A neuron $i$ spiked if its membrane voltage $V_i(t)$ reached a threshold $\theta$:

\begin{equation}
\mathrm{If} \, \, V_i(t) > \theta, \, V_i(t+1) = V_i(t) - 4.
\end{equation} 

\noindent
Note that the choice of $\theta$ does not change the activity of the model at all, so we set $\theta= 0$ for convenience.
The model neurons were arranged on a 2D lattice, and each neuron was connected locally to its four nearest neighbors with coupling strength $\alpha_{ij} = \alpha$:

\begin{equation}
V_i(t+1) = V_i(t) + \sum_j \alpha_{ij} \delta(t - T_j) + h_i(t),
\end{equation}

\noindent
where $T_j$ denotes the spike times of neuron $j$, and $h_i(t)$ is the Poisson drive to neuron $i$ with mean rate $h$ as defined for the BP above.
Note that the neurons at the edges and corners of the grid had only 3 and 2 neighbors, respectively.
This model is equivalent to the well-known Bak-Tang-Wiesenfeld model \autocite{Bak1987} if $h \rightarrow 0$ and $\alpha = 1$.
Subsampling \autocite{Priesemann2009} was implemented in the same manner as for the BN. 

\paragraph{Parameter choices.}

If not stated otherwise, simulations were run for $L=10^7$ time steps or until $A_{t}$ exceeded $10^9$, i.e. approximately half of the 32 bit integer range. If not stated otherwise, confidence intervals (\ref{sec:supp_variance}) were estimated from $B=100$ samples, both for simulation and experiments.

In Figs. \ref{fig:branching_cartoon}\textbf{c},\textbf{d}, BNs and the BTW model were simulated with $N=10^4$ units and $\E{A_{t}} = 100$. In Fig. \ref{fig:branching_cartoon}\textbf{e}, BPs were simulated with $m=0.9$ and $\E{A_{t}} = 100$.

In Fig. \ref{fig:animals}\textbf{c}, subcritical and critical BNs with $N=10^4$ and $\E{A_{t}} = 100$ were simulated, and $n=100$ units sampled. Because of the non-stationary, exponential growth in the supercritical case, here BPs were simulated with $h = 0.1$ and units observed with probability $\alpha=0.01$.


\subsection{Epidemiological recordings}
\label{sec:supp_epidemiology}

\paragraph{WHO data on measles worldwide.}
Time series with yearly case reports for measles in 194 different countries are available online from the World Health Organization (WHO) for the years between 1980 and 2014.
MR estimation was applied to these time series.
Because they contain very few data points and potential long-term drifts, we applied the consistency checks detailed above for every country (Tab. \ref{tab:supp_consistency}).
After these checks, 124 out of the 194 surveyed countries were accepted for MR analysis and included in our analysis.
Yearly information on approximate vaccination percentages (measles containing vaccine dose 1, MCV1) for the same countries and time span are also available online from the WHO.

\paragraph{RKI data on norovirus, measles and MRSA in Germany.}
For Germany, the Robert-Koch-Institute (RKI) surveys a range of infectious diseases on a weekly basis, including measles, norovirus, and invasive meticillin-resistant Staphylococcus aureus (MRSA). Case reports are available through their SURVSTAT@RKI server \autocite{RKI2016}. Because of possible changes in report policies in the beginning of surveillance, we omitted the data from the first 6 months of each recording. Moreover, we omitted the incomplete week on the turn of the year, thus evaluating 52 full weeks in each year.

The MRSA recording showed a slow, small variation in the case reports that can be attributed to slow changes in the drive rates. To compensate for these slow drifts, we corrected the time series by subtracting a moving average over 3 years (156 weeks). We then applied MR estimation to the obtained time series.
The recordings for measles and norovirus showed strong seasonal fluctuations of the case reports, resulting in a baseline oscillation of the autocorrelation function. We therefore used a modified model

\begin{equation}
r_k = b \cdot m^k + c \cdot \cos (2 \pi k / T)
\label{eq:mr_seasonal}
\end{equation}

\noindent
with a fixed period of $T = \SI{52}{weeks}$, and estimated $\hat{m}$, $\hat{b}$, and $\hat{c}$ from minimizing the residual of this modified equation.

In order to obtain the naive estimates using the conventional linear regression estimator $\hat{m}_\mathrm{C} = \hat{r}_1$, we used the following correction for seasonal fluctuations.
Each incidence count $a_t$ was normalized by the incidence counts from the same week, averaged over all years of recording ($\bar{a}_w = \E[y]{a_{w + 52 \cdot y}}$ with the average taken over the years $y$ for any week $w=1,\ldots,52$), yielding the deseasonalized time series $a^\prime_t = a_t / \bar{a}_{t \, \mathrm{mod} \, 52}$.
Linear regression was performed on this time series $a^\prime_t$.

For Fig. \ref{fig:diseases}\textbf{d}, subsampling was applied to the original time series assuming that every infection is diagnosed and reported  with a probability $\alpha$, yielding the binomial subsampling described in \ref{sec:supp_subsampling}. MR estimates were obtained from this subsampled time series according to Eq. \eqref{eq:mr_seasonal}, for the conventional estimator the subsampled time series was processed as described above.

\subsection{Animal experiments}
\label{sec:supp_animals}
We evaluated spike population dynamics from recordings in rats, cats and monkeys.
 The rat experimental protocols were approved by the Institutional Animal Care and Use Committee of Rutgers University \autocite{Mizuseki2009,Mizuseki2009a}.
 The cat experiments were performed in accordance with guidelines established by the Canadian Council for Animal Care \autocite{Blanche2009}.
The monkey experiments were performed according to the German Law for the Protection of Experimental Animals, and were approved by the Regierungspr\"asidium Darmstadt.
The procedures also conformed to the regulations issued by the NIH and the Society for Neuroscience.
The spike recordings from the rats and the cats were obtained from the NSF-founded CRCNS data sharing website \autocite{Blanche2006,Blanche2009,Mizuseki2009,Mizuseki2009a}. 

In rats the spikes were recorded in CA1 of the right dorsal hippocampus during an open field task.
We used the first two data sets of each recording group (ec013.527, ec013.528, ec014.277, ec014.333, ec015.041, ec015.047, ec016.397, ec016.430).
The data-sets provided sorted spikes from 4 shanks (ec013) or 8 shanks (ec014, ec015, ec016), with 31 (ec013), 64 (ec014, ec015) or 55 (ec016) channels. We used both, spikes of single and multi units, because knowledge about the identity and the precise number of neurons is not required for the MR estimator.
More details on the experimental procedure and the data-sets proper can be found in \cite{Mizuseki2009,Mizuseki2009a}.

For the spikes from the cat, neural data were recorded by Tim Blanche in the laboratory of Nicholas Swindale, University of British Columbia \autocite{Blanche2009}.
We used the data set pvc3, i.e. recordings in area 18 which contain 50 sorted single units \autocite{Blanche2006}.
We used that part of the experiment in which no stimuli were presented, i.e., the spikes reflected spontaneous activity in the visual cortex of the anesthetized cat. Because of potential non-stationarities at the beginning and end of the recording, we omitted data before \SI{25}{s} and after \SI{320}{s} of recording.
Details on the experimental procedures and the data proper can be found in \cite{Blanche2009,Blanche2006}.

The monkey data are the same as in \cite{Pipa2009, Priesemann2014}. In these experiments, spikes were recorded simultaneously from up to 16 single-ended micro-electrodes ($ \diameter  = 80 \, \mu  \mathrm{m}$) or tetrodes ($ \diameter  = 96 \, \mu  \mathrm{m}$) in lateral prefrontal cortex of three trained macaque monkeys (M1: 6 kg \female ; M2: 12 kg \male ; M3: 8 kg \female ).
The electrodes had impedances between 0.2 and $1.2 \, \mathrm{M} \Omega$ at 1 kHz, and were arranged in a square grid with inter electrode distances of either 0.5 or 1.0 mm.
The monkeys performed a visual short term memory task. The task and the experimental procedure is detailed in \cite{Pipa2009}.
We analyzed spike data from 12 experimental sessions comprising almost 12.000 trials (M1: 4 sessions; M2: 5 sessions; M3: 3 sessions).
6 out of 12 sessions were recorded with tetrodes.
Spike sorting on the tetrode data was performed using a Bayesian optimal template matching approach as described in \cite{Franke2010} using the “Spyke Viewer” software \autocite{Propper2013}. 
On the single electrode data, spikes were sorted with a multi-dimensional PCA method (Smart Spike Sorter by Nan-Hui Chen).

\paragraph{Analysis.}
For each recording, we collapsed the spike times of all recorded neurons into one single train of population spike counts $a_{t}$, where $a_{t}$ denotes how many neurons spiked in the $t^{th}$ time bin $\Delta t$.
We used $\Delta t = \SI{4}{ms}$, reflecting the propagation time of spikes from one neuron to the next.
Note that $m$ scales with the bin size (bs) as $m(\mathrm{bs}=k \Delta t) = m(\mathrm{bs}=\Delta t)^k$, while the corresponding autocorrelation times are invariant under bin size changes.
For Figs. \ref{fig:animals}\textbf{b} and \ref{fig:supp_cat_single_electrodes}, we investigated single neuron activity by applying similar binning to the spike times of each neuron individually.

From these time series, we estimated $\hat{m}$ using the MR estimator with $k_\mathrm{max} = 2500$ (corresponding to \SI{10}{s}) for the rat recordings, $k_\mathrm{max} = 150$ (\SI{600}{ms}) for the cat recording, and $k_\mathrm{max} = 500$ (\SI{2000}{ms}) for the monkey recordings, assuring that $k_\mathrm{max}$ was always in the order of multiple autocorrelation times. 
Experiments were excluded if the tests according to \ref{sec:supp_poisson} detected potential nonstationarities.


\begin{scriptsize}

\end{scriptsize}

\newpage

\begin{figure}
\includegraphics[width=\textwidth]{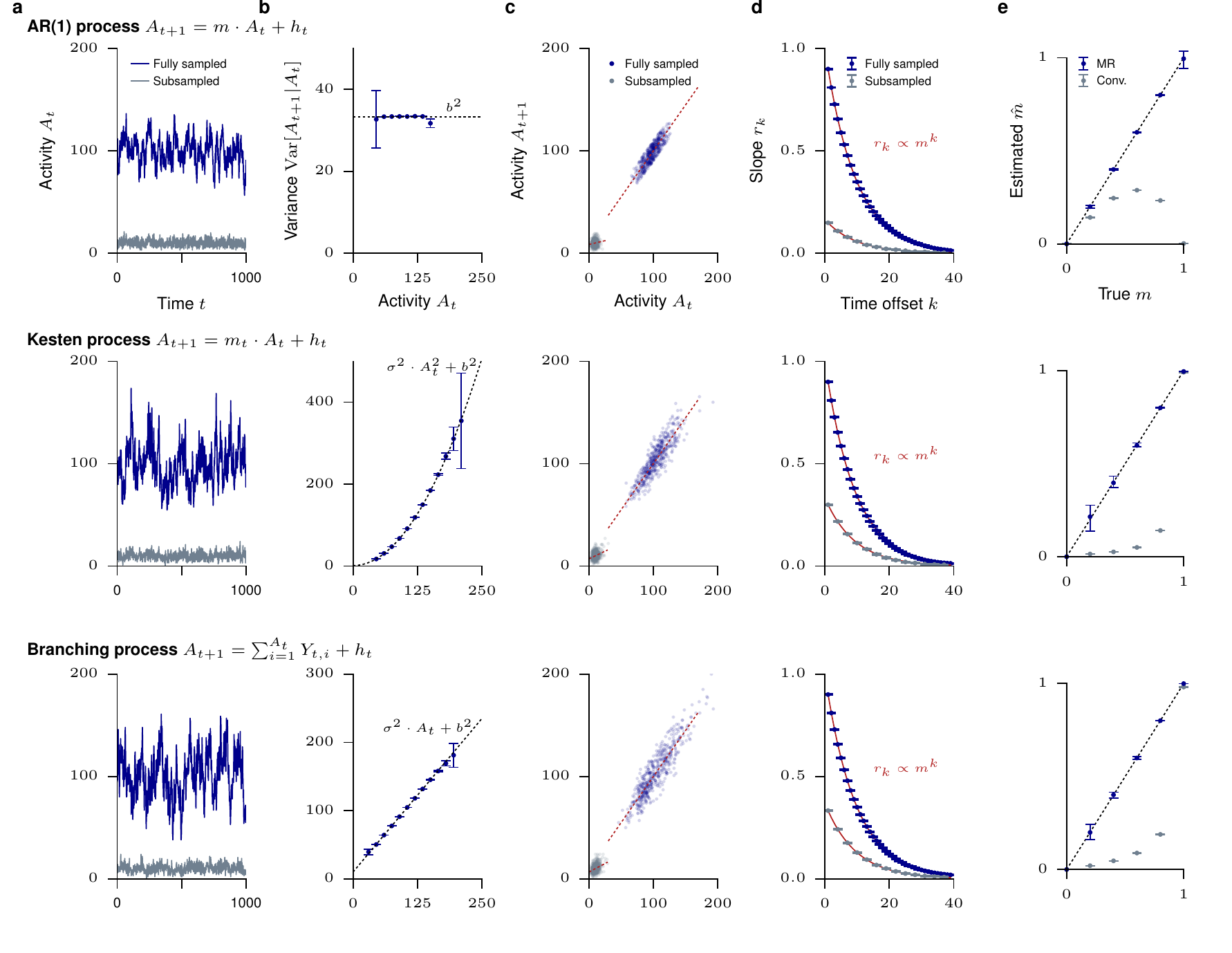}
\caption{\textbf{MR estimation for PARs.} Although derived for branching processes (BPs), we conjectured that MR estimation is applicable to any process with a first order autoregressive representation (PAR). We here show exemplary results for three different classes of PARs:
In AR(1) processes, additive noise $h_t$ is drawn independently at each time step. Here, we considered a uniform distribution $h_t \sim \mathcal{U}(0, 2h)$.
In a Kesten process, additive and multiplicative noise is drawn at each time step, both $m_t$ and $h_t$ being i.i.d. for all $t$. Here, $m_t \sim \mathcal{N}(m, \sigma^2)$ with $\sigma = m / 10$ and $h_t \sim \mathcal{N}(h, b^2)$ with $b = h / 10$ are normally distributed.
In a BP, each unit $i$ at time $t$ generates $Y_{t,i}$ offspring, which are i.i.d. for all $t$ and $i$.
In addition, a random number $h_t$ of units are introduced at each time step. Here, $Y_{t,i} \sim \mathrm{Poi}(m)$ and $h_t \sim \mathrm{Poi}(h)$ are Poisson distributed, $\sigma^2$ and $b^2$ denote the variances of $Y_{t,i}$ and $h_t$ respectively.
All three processes satisfy the first-order statistical recursion relation $\E{A_{t+1} \cond A_t} = m A(t) + h$ (Eq. \eqref{eq:PAR_equation}).
Parameters are chosen such that for all simulations the average activity is identical, $\E{A_t} = 100$.
\textbf{a}. Fully sampled and subsampled (binomial subsampling $a_t \sim \mathrm{Bin}(A_t, \alpha)$ with $\alpha = 1/10$) time series are shown for $m=0.9$ and $h=10$.
\textbf{b}. The three classes show the same first-order statistics according to Eq. \eqref{eq:PAR_equation}. However, their second order statistics $\Var{A_{t+1} \cond A_t}$ differ as indicated.
\textbf{c}. Conventional linear regression underestimates $\mh$ for all three processes under subsampling.
\textbf{d}. MR estimation is applicable to all three processes under full sampling and subsampling, i.e. $r_k \propto m^k$ holds.
\textbf{e}. While MR estimation returns consistent estimates of $m$ even under subsampling, the conventional estimator underestimates $\mh$ for all three processes.
}
\label{fig:supp_ar_processes}
\end{figure}

\pagebreak

\begin{figure}
\centering
\includegraphics[width=140mm]{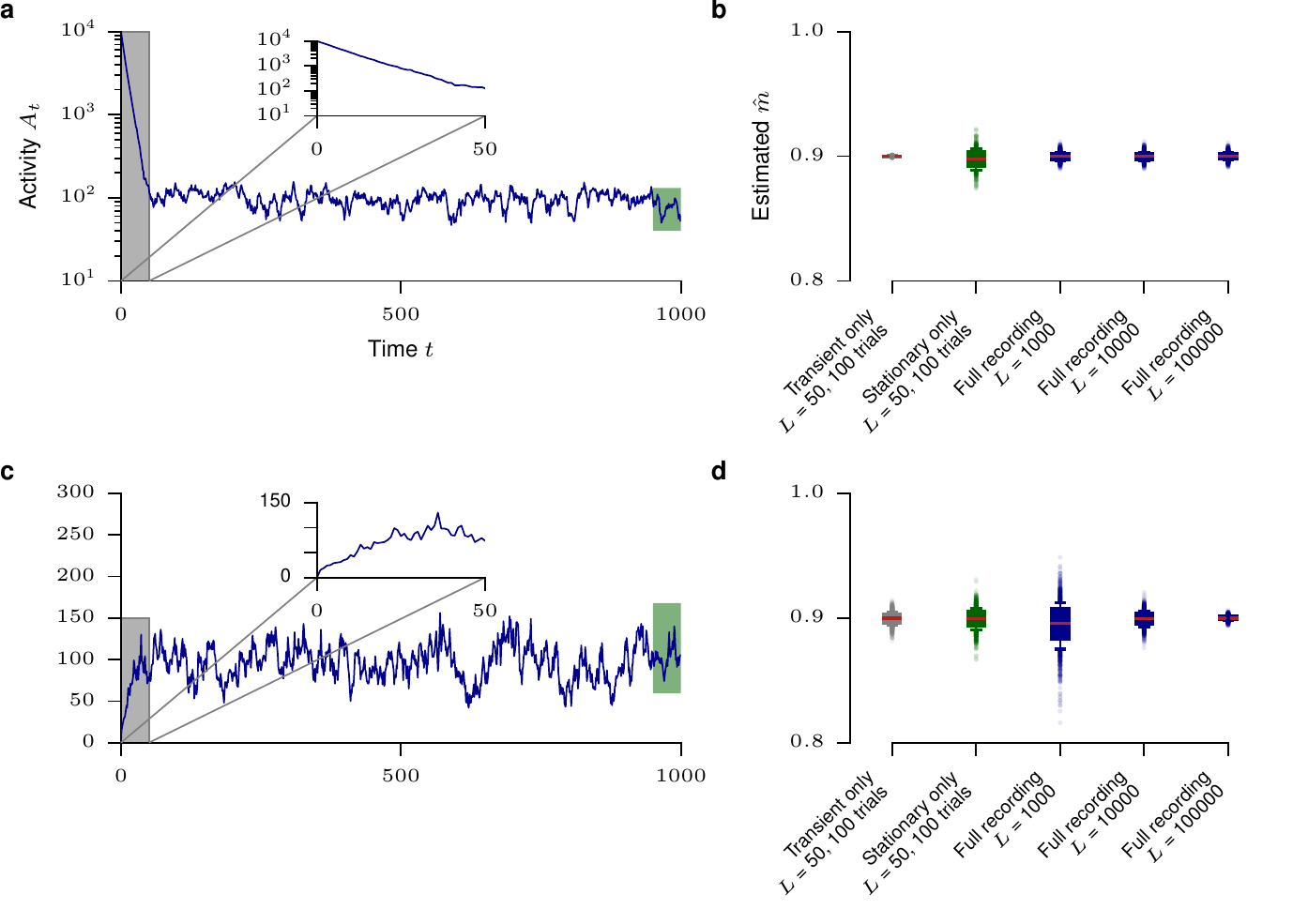}
\caption{\textbf{MR estimation with transients.} A branching process (BP) with $m=0.9$ and expected activity $\E{A_t} = 100$ is started far from the stationary distribution, namely with $A_0 = 10,000$ (top) or $A_0 = 0$ (bottom). Using MR estimation, $\mh$ is inferred from: (i) only the first 50 data points of 100 independent trials, i.e. only transient parts of the activity in each trial (gray); (ii) 50 data points of 100 independent trials after the activity was allowed to relaxate to the stationary distribution in each trial (green); (iii) from one single trial comprising both transient and stationary parts, using $10^3$, $10^4$, or $10^5$ time steps (blue). 
\textbf{a}, \textbf{c}. Activity $A_t$ of one single trial of $10^3$ time steps as a function of time $t$. Insets show magnified transient period where $A_t$ converges to the stationary distribution. Shaded areas indicate transient (gray) and stationary (green) parts taken into account for estimates (i) and (ii) respectively.
\textbf{b}, \textbf{d}. Boxplots (derived from 1000 independent realizations) for the result $\mh$ of MR estimation, based on the data specified above.
}
\label{fig:supp_consistency_transients}
\end{figure}

\pagebreak

\begin{figure}
\includegraphics[width=\textwidth]{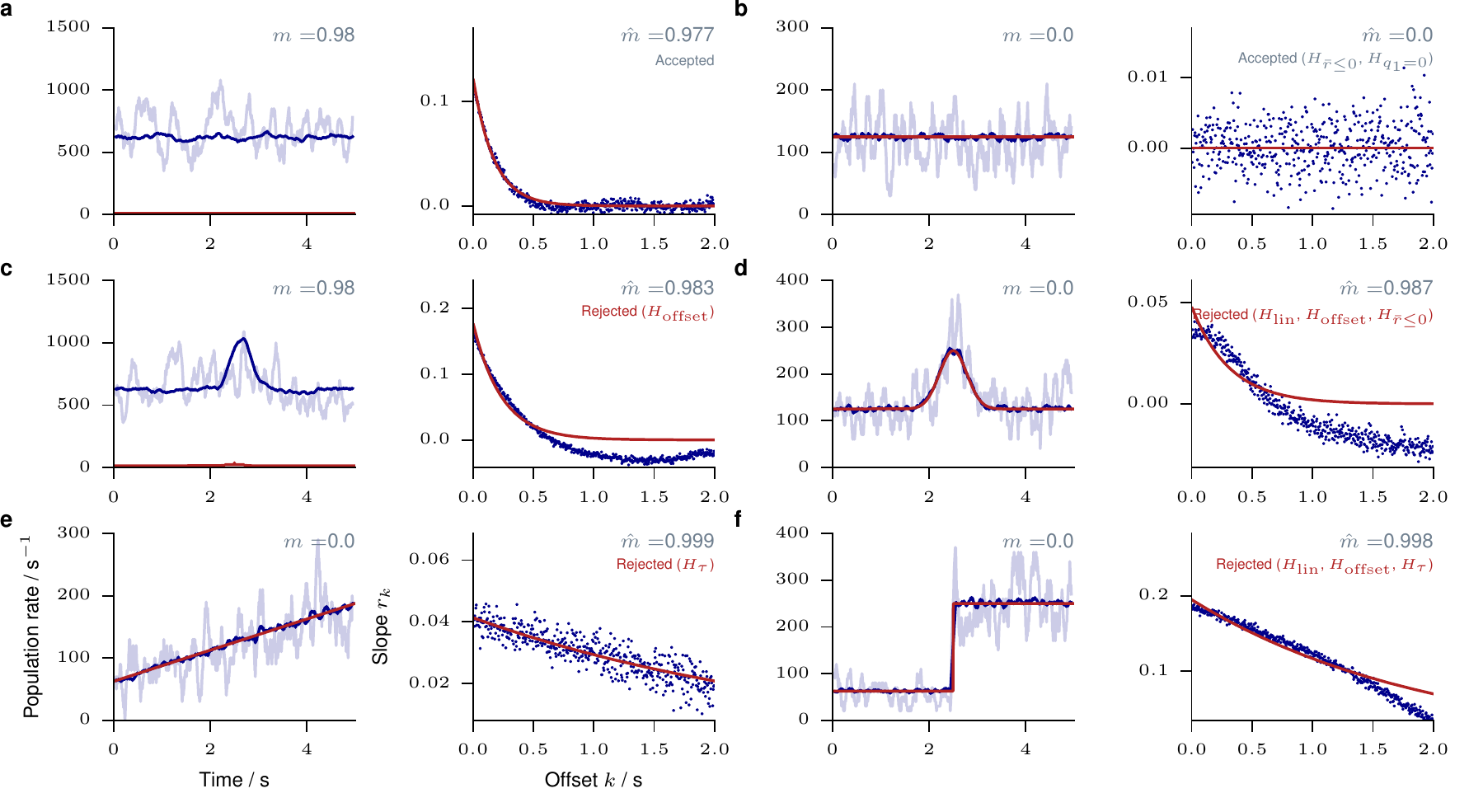}
\caption{\textbf{Excluding nonstationary data.} Each left panels shows the time series $a_t$ of the activity from one single trial (light blue) and averaged activity from 100 trials (dark blue), recorded
 from $n=50$ out of $N=10^4$ neurons. Each right panels shows the corresponding MR estimation from one single trial.
 We investigated the following, generic cases for the temporal evolution of the drive rate $\E{h_t}$:
\textbf{a}, \textbf{b}. The drive is stationary ($\E{h_t}$ identical for all $t$, red), so are the mean rates $\E{a_t}$. 
\textbf{c}, \textbf{d}. The drive exhibits a transient increase centered around half of the simulation time. The mean rate $\E{a_t}$ is therefore also time-dependent and follows the temporal evolution of $\E{h_t}$.
\textbf{e}. The drive shows a linear increase over the simulation.
\textbf{f}. The drive exhibits a step function after half the simulation.
Nonstationarities (\textbf{c} -- \textbf{f}) typically lead to an overestimation of $\mh$, which is particularly severe if the underlying dynamics is Poissonian ($m=0$).
The tests defined in \ref{sec:supp_poisson} (see Tab. \ref{tab:supp_consistency}) were able to exclude time series where the investigated nonstationarities were present, while accepting the stationary cases \textbf{a}, \textbf{b}.
}
\label{fig:supp_monkey_nonstationaritites}
\end{figure}

\newpage

\begin{figure}
\includegraphics[width=0.48\textwidth]{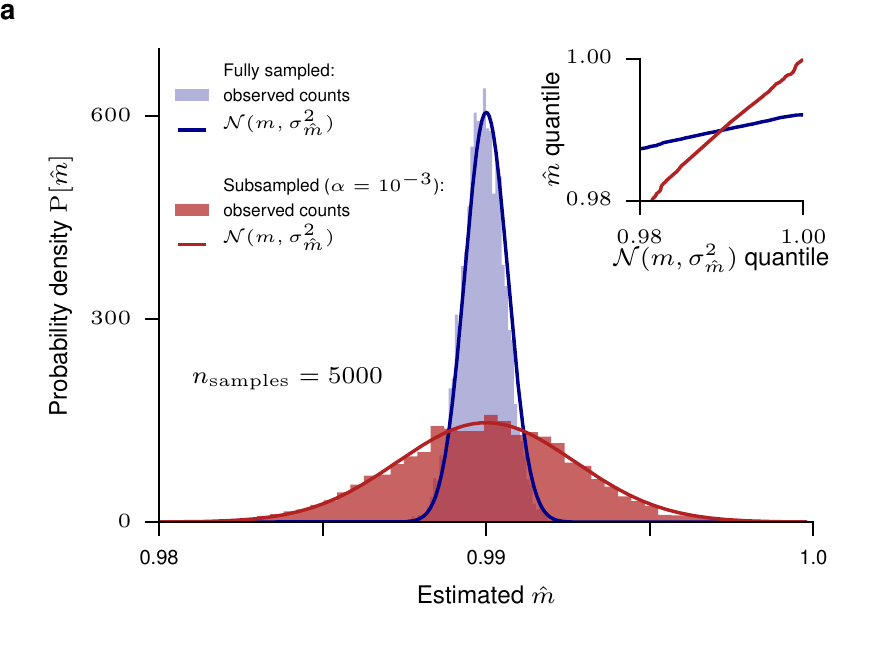}
\includegraphics[width=0.48\textwidth]{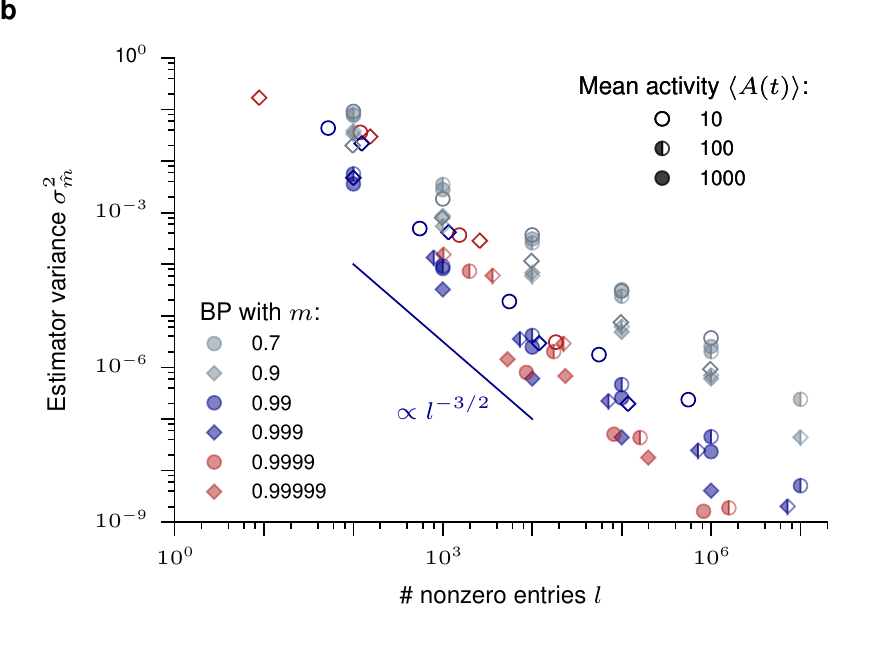}\\
\includegraphics[width=0.48\textwidth]{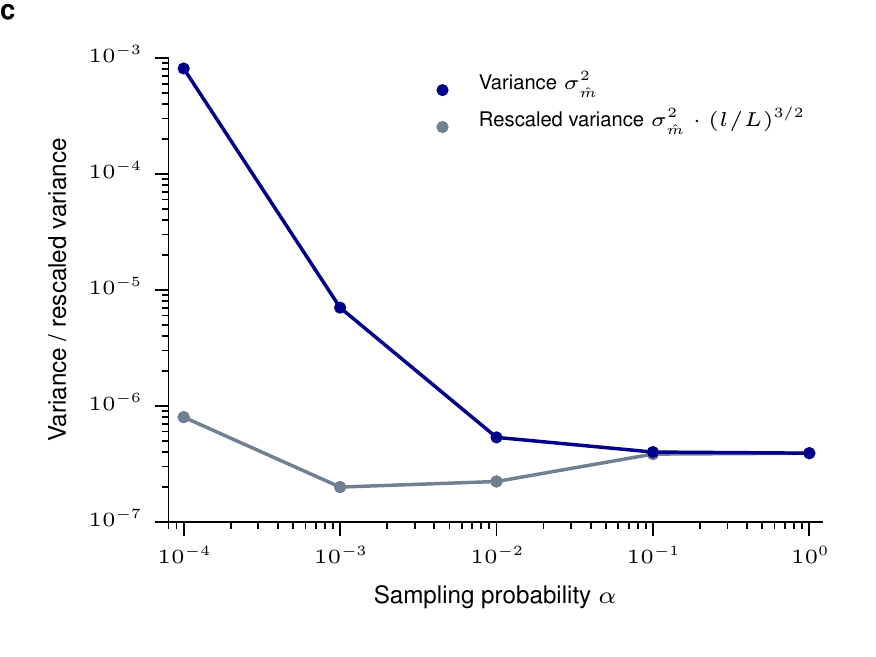}
\includegraphics[width=0.48\textwidth]{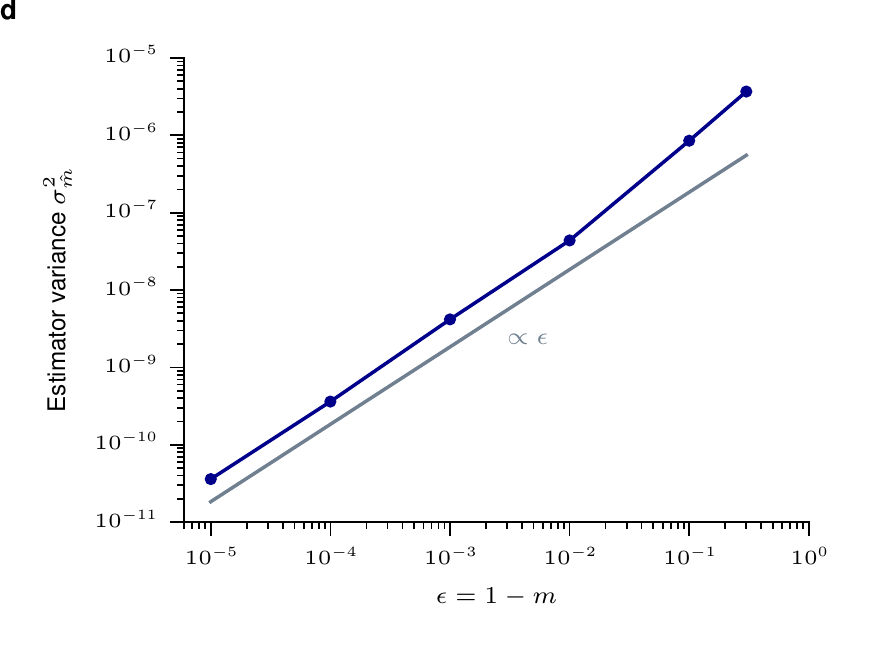}

\caption{\textbf{Variance of the MR estimates.}
This figure shows numerical result for the distribution and variability of the estimate $\mh$ as a function of multiple parameters.
\textbf{a}. Distribution of the estimate $\hat{m}$, estimated from 5000 independent copies of a branching process (BP) with $m = 0.99$, $\E{A_t}= 100$ and length $L=10^5$: normalized histograms of the probability of estimating $\hat{m}$ for full sampling (blue) and binomial subsampling with $\alpha=0.001$ (red), together with normal distributions $\mathcal{N}(m, \hat{\sigma}^2_{\hat{m}})$. Inset: $Q$-$Q$-plot for the quantiles of $\mathcal{N}(m, \hat{\sigma}^2_{\hat{m}})$ and the quantiles of the estimated $\hat{m}$ under both samplings. The estimated $\hat{m}$ are found to be distributed normally in both cases (fully sampled: $r^2 = 0.9995$, subsampled: $r^2 = 0.998$).
\textbf{b}. The variance $\sigma^2_{\mh}$ of the estimate $\mh$ is estimated from 100 independent copies of a BP.
Results for different $m$, mean activities $\E{A_t}$ and time series lengths $L$ are plotted as a function of the  effective time series length $l = | \lbrace A_t \cond A_t > 0 \rbrace |$, the number of nonzero entries.
For any given $m$, the variance of $\hat{m}$ shows algebraic scaling $\sigma^2_{\hat{\epsilon}} \propto l^\gamma$.
The exponent of this scaling depends on $m$, with higher $\gamma$ the closer $m$ is to unity.
Hence, the benefit from longer time series is larger the closer a system is to criticality. 
Importantly, the variance does not directly depend on the mean activity $\E{A_t}$, this number only influences the accuracy of MR estimation via the potential change in $l$.
\textbf{c}. The variance of the estimate $\mh$ is estimated from 100 independent copies of a BP with $m = 0.99$, $\E{A_t}= 100$, and $L=10^5$ and plotted as a function of the sampling probability $\alpha$ under binomial subsampling. While the variance appears to increase dramatically under stronger subsampling, this increase can be attributed to the according decrease of the effective time series length $l$.
After rescaling by $(l/L)^{3/2}$ (cf. panel \textbf{b}), the rescaled variance remains within one order of magnitude over four orders of magnitude in $\alpha$.
Hence, the accuracy of the estimator is not directly influenced by the degree of subsampling.
\textbf{d}. The variance $\sigma^2_{\hat{m}}$ is estimated from 100 independent copies of a BP with $m = 0.99$, $h = 1$, and $L=10^5$ and plotted as function of the distance to criticality $\epsilon = 1 - m$.
The variance is found numerically to scale as $\sigma^2_{\mh} \propto \epsilon$, hence the standard deviation scales as $\sigma_{\mh} \propto \sqrt{\epsilon}$.
Similar scaling results were found for other linear (like the interquartile range) and quadratic (like the mean squared error) measures of variation.
}
\label{fig:supp_variance_estimation}
\end{figure}

\newpage

\begin{figure}
\includegraphics[width=\textwidth]{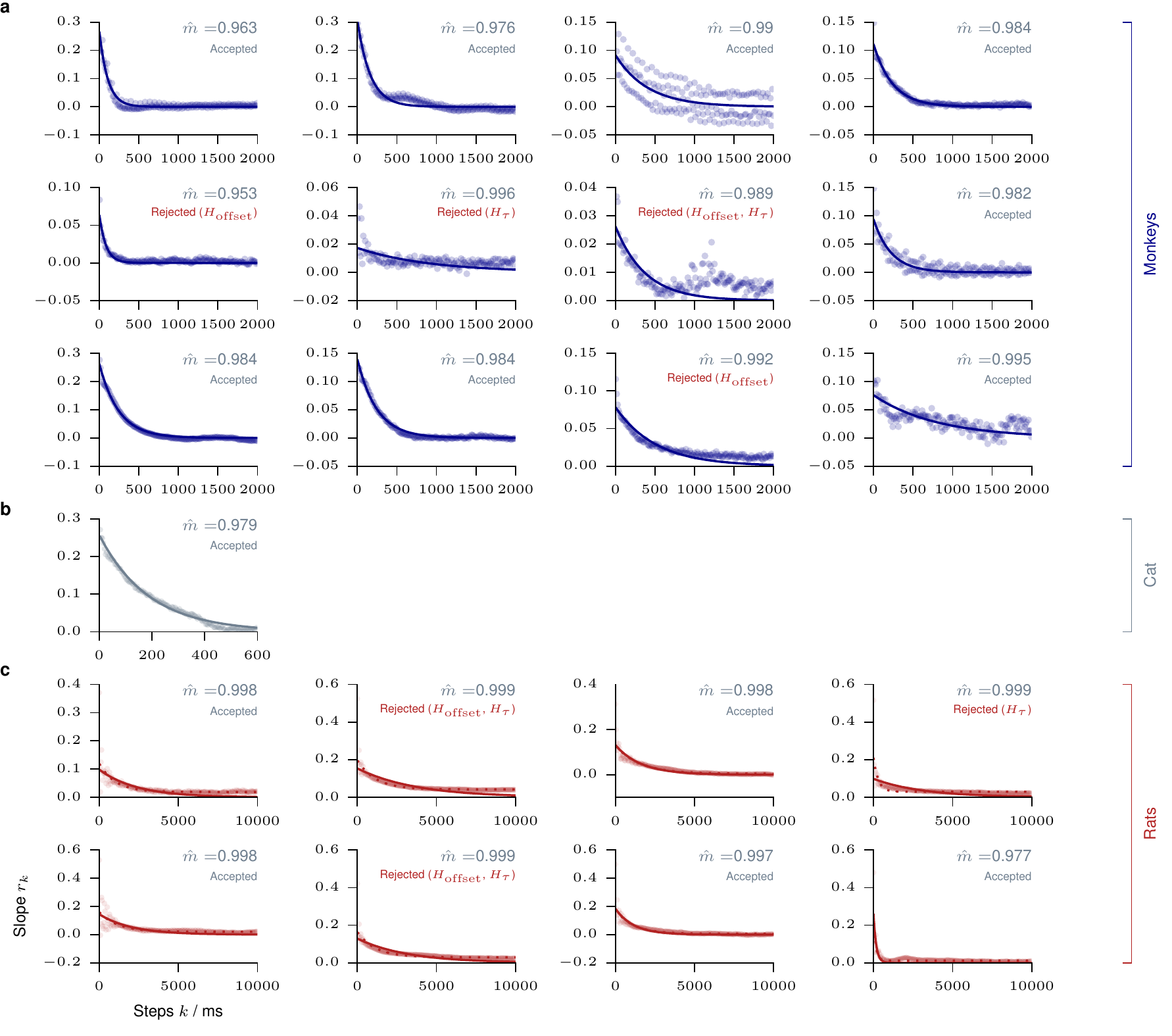}
\caption{\textbf{MR estimation for individual animals.} MR estimation is shown for every individual animal (see \ref{sec:supp_animals}). The consistency checks are detailed in the \ref{sec:supp_poisson} (see Tab. \ref{tab:supp_consistency}). \textbf{a.} Data from monkey prefrontal cortex during an working memory task. The third panel shows a oscillation of $r_k$ with a frequency of 50 Hz, corresponding to measurement corruption due to power supply frequency. \textbf{b.} Data from anesthetized cat primary visual cortex. \textbf{c.} Data from rat hippocampus during a foreaging task. In addition to a slow exponential decay, the slopes $r_k$ show the $\vartheta$-oscillations of 6 -- 10 Hz present in hippocampus. Dashed lines indicate results for an exponential model with offset, solid lines results for the model without offset (compare \ref{sec:supp_poisson}).}
\label{fig:supp_animal_data}
\end{figure}

\newpage

\begin{figure}
\includegraphics[width=\textwidth]{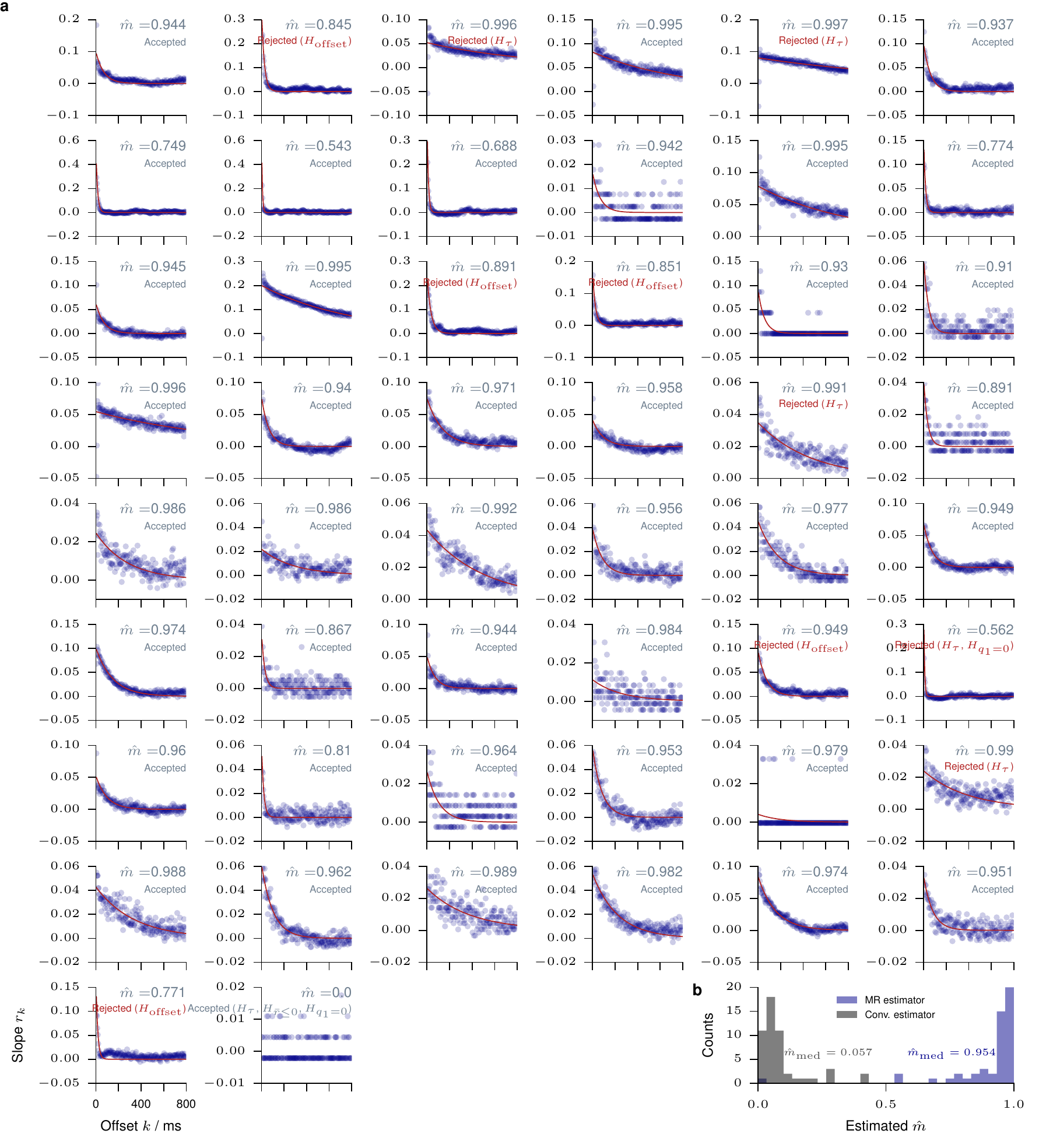}
\caption{\textbf{MR estimation from single neuron activity (cat).} MR estimation is used to estimate $\hat{m}$ from the activity $a_t$ of a single neurons in cat visual cortex. \textbf{a.} Each panel shows MR estimation for one of the 50 recorded neurons. Autocorrelations decay rapidly in some neurons, but long-term correlations are present in the activity of most neurons. The consistency checks are detailed in \ref{sec:supp_poisson} (see Tab. \ref{tab:supp_consistency}). \textbf{b.} Histogram of the single neuron branching ratios $\hat{m}$, inferred with the conventional estimator and using MR estimation. The difference between these estimates demonstrates the subsampling bias of the conventional estimator, and how it is overcome by MR estimation.}
\label{fig:supp_cat_single_electrodes}
\end{figure}

\begin{figure}
\includegraphics[width=\textwidth]{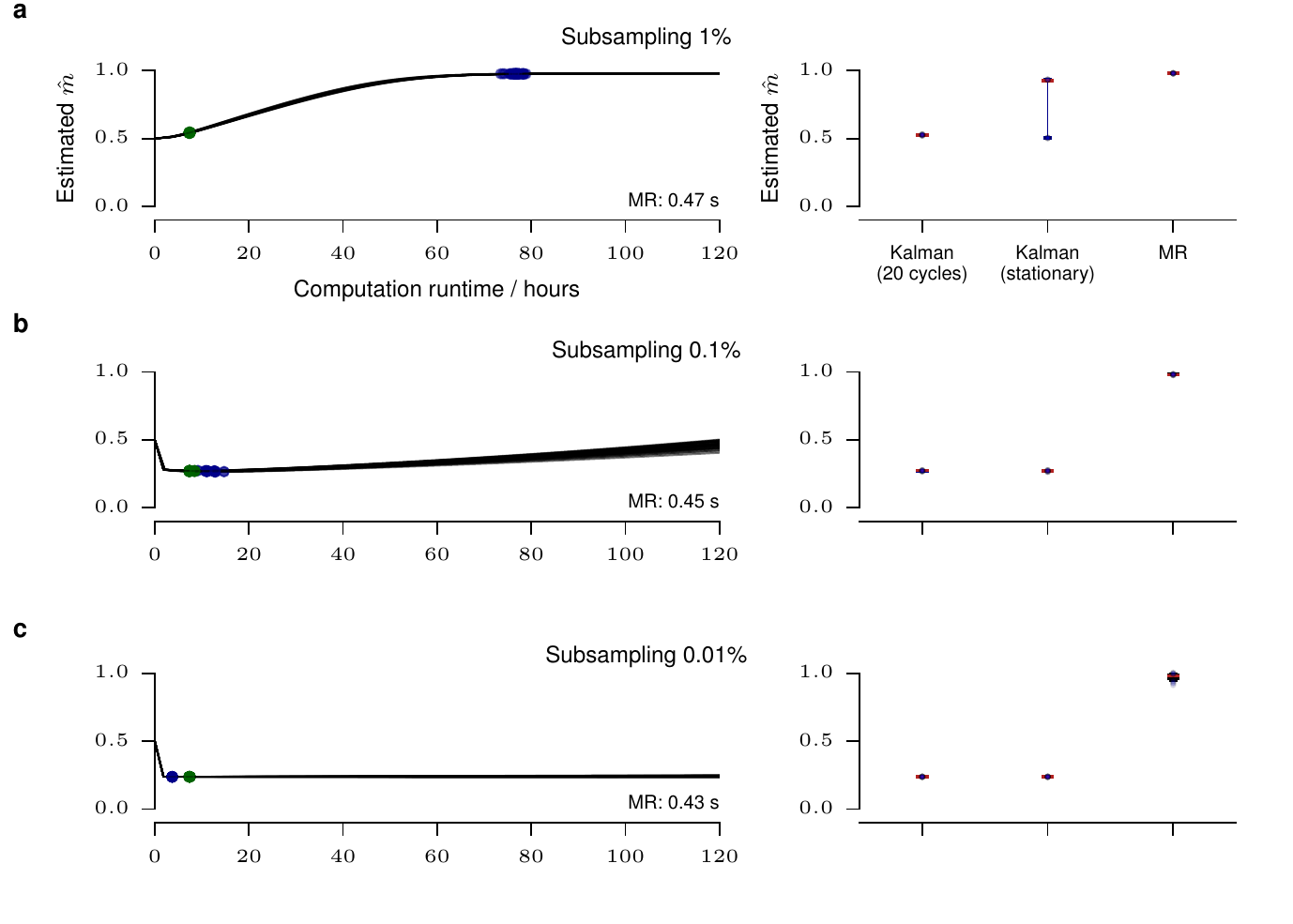}
\caption{\textbf{Kalman EM estimation.} Expectation maximization (EM) based on Kalman filtering and MR estimation are used to infer $\mh$ from BPs with $m=0.99$ and different degrees of subsampling. 
Left column: inferred $\mh$ as a function of the EM runtime for 100 independent copies of the BP.
The EM algorithm is terminated after 20 cycles (green dots) or after the inferred $\mh$ changed only marginally (blue dots, see \ref{sec:supp_kalman}).
The median runtime of MR estimation for the same BPs is also indicated.
Right column: estimated $\mh$ for all three methods.
\textbf{a}. Under 1\% subsampling, the EM algorithm converged after runtimes of about \SI{80}{h}, compared to \SI{0.43}{s} for MR estimation.
\textbf{b}. Under 0.1\% subsampling, $\mh$ inferred by the EM algorithm reaches a steady state after \SI{10}{h}, but is severely biased.
The slow rise of $\mh$ might lead to a convergance to the proper $m$ after several weeks of projected runtime (ignoring common termination criteria).
\textbf{c}. Under 0.01\% subsampling, $\mh$ inferred by the EM algorithm converge to a biased value.
In contrast, MR estimation returns a correct $\mh$ in all three cases, and outperforms the EM algorithm by a factor of $10^5$ to $10^6$ in terms of the runtime.}
\label{fig:supp_kalman}
\end{figure}




\end{document}